\newtheorem{theorem}{Theorem}
\newtheorem{lemma}[theorem]{Lemma}
\newenvironment{proof}[1][Proof]{\noindent\textbf{#1.} }{\ }
\begin{document}
\title{Entanglement in a linear coherent feedback chain of nondegenerate  optical parametric amplifiers}
\author{Zhan Shi and Hendra I. Nurdin% <-this % stops a space
%\thanks{This work was not supported by any organization}% 
\thanks{
Z. Shi and H. I. Nurdin are with School of Electrical Engineering and 
Telecommunications,  UNSW Australia,  
Sydney NSW 2052, Australia (e-mail: zhan.shi@student.unsw.edu.au,  h.nurdin@unsw.edu.au).} 
}
\maketitle

%%%%%%%%%%%%%%%%%%%%%%%%%%%%%%%%%%%%%%%%%%%%%%%%%%%%%%%%%%%%%%%%%%%%%%%%%%%%%%%%%%%%%%%%%%%%%%%%%%%%%%%%%%%%%%
%%%%%%%%%%%%%%%%%%%%%%%%%%%%%%%%%%%%%%%%%%%%%%%%%%%%%%%%%%%%%%%%%%%%%%%%%%%%%%%%%%%%%%%%%%%%%%%%%%%%%%%%%%%%%%
%%%%%%%%%%%%%%%%%%%%%%%%%%%%%%%%%%%%%%%%%%%%%%%%%%%%%%%%%%%%%%%%%%%%%%%%%%%%%%%%%%%%%%%%%%%%%%%%%%%%%%%%%%%%%%
\begin{abstract}
This paper is concerned with linear quantum networks of $N$  nondegenerate optical parametric amplifiers (NOPAs), with $N$ up to 6, which are interconnected in a coherent feedback chain. Each network connects two communicating parties (Alice and Bob) over two transmission channels. In previous work we have shown that a dual-NOPA coherent feedback network generates better Einstein-Podolsky-Rosen (EPR) entanglement (i.e., more two-mode squeezing) between its two outgoing Gaussian fields than a single NOPA, when the same total pump power is consumed and the systems undergo the same transmission losses over the same distance. This paper aims to analyze stability, EPR entanglement between two outgoing fields of interest, and bipartite entanglement of two-mode Gaussian states of cavity modes of the $N$-NOPA networks under the effect of transmission and amplification losses, as well as time delays. It is numerically shown that, in the absence of losses and delays, the network with more NOPAs in the chain requires less total pump power to generate the same degree of EPR entanglement. Moreover, we report on the internal entanglement synchronization that occurs in the steady state  between certain pairs of Gaussian oscillator modes inside the NOPA cavities of  the networks.
\end{abstract}

%%%%%%%%%%%%%%%%%%%%%%%%%%%%%%%%%%%%%%%%%%%%%%%%%%%%%%%%%%%%%%%%%%%%%%%%%%%%%%%%%%%%%%%%%%%%%%%%%%%%%%%%%%%%%%
%%%%%%%%%%%%%%%%%%%%%%%%%%%%%%%%%%%%%%%%%%%%%%%%%%%%%%%%%%%%%%%%%%%%%%%%%%%%%%%%%%%%%%%%%%%%%%%%%%%%%%%%%%%%%%
%%%%%%%%%%%%%%%%%%%%%%%%%%%%%%%%%%%%%%%%%%%%%%%%%%%%%%%%%%%%%%%%%%%%%%%%%%%%%%%%%%%%%%%%%%%%%%%%%%%%%%%%%%%%%%
\section{Introduction}
\label{sec:intro}  
\noindent    
Entanglement is a key factor in certain quantum information processing tasks, such as quantum teleportation \cite{Nielsen2000, BSLR}. Continuous-variable quantum information is highly motivated, as preparation of continuous-variable entangled states is efficient and mathematical description of a continuous variable system is adapted to various physical systems (e.g., quadrature operators of light and total angular momentum operators of an ensemble of atoms satisfy the canonical commutation relations) \cite{Braunstein2005, Weedbrook2012, Adesso2014}. In particular, Gaussian states play an important role in continuous-variable quantum information ascribable to 1) physically, the ground state of a quantized electromagnetic field is a Gaussian state; 2) mathematically, though an arbitrary Gaussian state lives in an infinite Hilbert space, it is simply and completely characterized by the mean and (symmetrized) covariance of field operators, which are represented by a vector and a matrix with finite dimensions, respectively; 3) experimentally, entanglement in Gaussian states is easily obtained \cite{Adesso2014, Ferraro2005, Adesso2007, Ou1992}. For instance, Gaussian entangled beams can be generated by employing a strong coherent pump beam to shine a nonlinear $\chi^{2}$ crystal inside the cavity of a nondegenerate optical parametric amplifier (NOPA). Interactions between the pump beam and two modes in the cavity produce a pair of entangled outgoing fields in Gaussian states which have strong correlations in quadrature-phase amplitudes \cite{Ou1992}. Such entanglement is called Einstein-Podolsky-Rosen (EPR) entanglement.

In this paper, we investigate EPR entanglement between two propagating continuous-mode light fields and bipartite entanglement of two-mode Gaussian states. The former one, namely, the two-mode squeezing of the two continuous-mode fields, is characterized by a two-mode squeezing spectrum $V(\imath \omega)$, which can be obtained via an associated quantum Langevin equation. Strong EPR entanglement is indicated by a high degree of two-mode squeezing over a wide frequency range. The sufficient condition of EPR entanglement existing at a certain frequency for a pair of quadrature-phase amplitudes is that the value of the corresponding $V(\imath \omega)$ satisfies a sum criterion \cite{Ou1992, Vitali2006}. On the other hand, entanglement of two-mode Gaussian states can be measured by the logarithmic negativity $E(t)$ in the time domain. The logarithmic negativity is calculated based on the covariance matrix of the position and momentum operators corresponding to the two modes. For strong entanglement, $E(t)$ has a high positive value \cite{Adesso2007,Laurat2005}. More details of the two quantities of entanglement are given in Section~\ref{sec:entanglement}.

In reality, a quantum system is dissipative due to interactions between the system and its environment. Such inevitable losses degrade entanglement. Thus, transmission distance and communication quality are limited. Failure of the communication may even happen \cite{Nielsen2000, Yamamoto2008}. Consequently, reliable generation and distribution of entanglement in the presence of losses in transmission channels is a central issue in quantum communications.

%%%%%%%%%%%%%%%%%%%%%%%%%%%%%%%%%%%%%%%%%%%%%%%%%%%%%%%%%%%%%%%%%%%%%%%%%%%%%%%%%%%%%%%%%%%%%%%%%%%%%%%%%%%%%%%%%%%%%%%%%%%%%%%%%%%%%%%%%%%%%%%%%%%%%%%%%%%%%%%%%%%%%%%%%%%%%%%%%%%%%%%%%%%%%%%%%%%%%%%%%%%%%%%%%%%%%%%%%%%%%%%%
\begin{figure}[htbp]
\begin{center}
\includegraphics[scale=0.4]{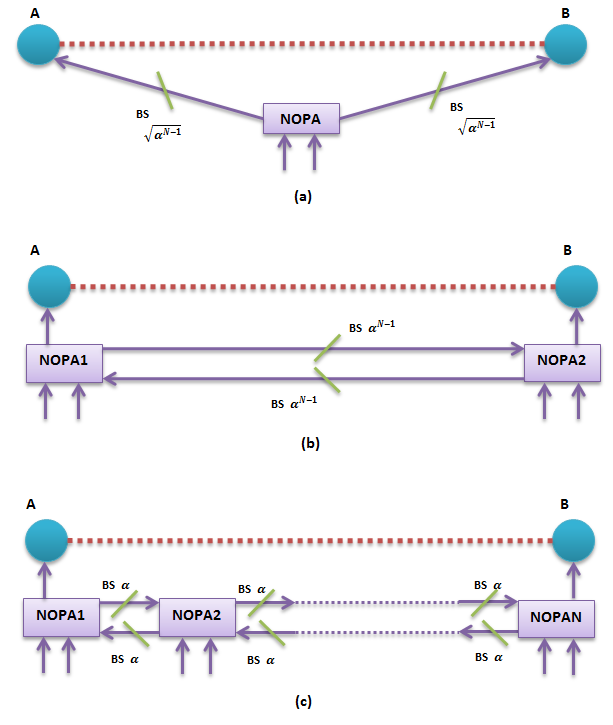}
\caption{(a) Entangled pairs (blue circles) generated by a single NOPA which is placed in between Alice (A) and Bob (B), say at Charlie's (C). (b) Entangled pairs generated by a dual-NOPA coherent feedback system, in which two NOPAs are deployed at Alice's and Bob's separately. (c) Entangled pairs generated by an $N$-NOPA coherent feedback system, in which N NOPAs are evenly distributed in a linear fashion between Alice and Bob. Transmission losses of the three systems are denoted by beamsplitters (BS) with transmission rates $\sqrt{\alpha^{N-1}}$, $\alpha^{N-1}$ and $\alpha$, respectively. (Calculations of transmission rates are given in Section~\ref{sec:sys}.)}\label{fig:systems}
\end{center}
\end{figure}
%%%%%%%%%%%%%%%%%%%%%%%%%%%%%%%%%%%%%%%%%%%%%%%%%%%%%%%%%%%%%%%%%%%%%%%%%%%%%%%%%%%%%%%%%%%%%%%%%%%%%%%%%%%%%%%%%%%%%%%%%%%%%%%%%%%%%%%%%%%%%%%%%%%%%%%%%%%%%%%%%%%%%%%%%%%%%%%%%%%%%%%%%%%%%%%%%%%%%%%%%%%%%%%%%%%%%%%%%%%%%%%%

Our previous work \cite{SNQIP} has investigated EPR entanglement prepared by a dual-NOPA coherent feedback network where the two NOPAs are distributed separately at two distant communicating ends (Alice and Bob), as shown in Fig. \ref{fig:systems} (b). The degree of EPR entanglement is influenced by the total pump power applied to the system, values of the damping rates of the NOPAs, amplification losses induced by unwanted interactions between a NOPA and its environment, as well as transmission losses caused by leakage of photons along the transmission channels. Moreover, time delays in the process of transmission narrow the bandwidth of suppressed two-mode squeezing. Compared to a single NOPA placed in between the two ends (at Charlie's) over the same transmission distance as indicated in Fig. \ref{fig:systems} (a), the coherent feedback configuration consumes less total pump power to achieve the same degree of EPR entanglement when amplification losses are ignored and the damping rates of the systems are identical; on the other hand, the dual-NOPA system proposed in \cite{SNQIP} achieves higher level of EPR entanglement against transmission losses under the same total pump power and damping rates.  More precisely, for sufficiently high transmission rates one would employ a distributed version of the scheme, while for low transmission rates a centralised architecture would be employed to give enhanced entanglement under the same pump power and damping rates.

This paper studies a coherent feedback configuration of up to six NOPAs that are evenly deployed between two ends (Alice and Bob) and connected in a linear coherent feedback interconnection, as shown in Fig.  \ref{fig:systems} (c). Descriptions of optical components employed in the system as well as the dynamics of the system under the effects of losses and time delays are given in Section~\ref{sec:sys}.
Section~\ref{sec:stability} gives an analysis of stability conditions in systems without and with losses. We prove a theorem showing that the stability thresholds can be obtained by solving certain polynomials.   
Section~\ref{sec:entanglement} is devoted to investigating and comparing degrees of end-to-end EPR entanglement of the $N$-NOPA systems in absence of time delays but with losses considered. We give the ideal values of $\theta_a$ and $\theta_b$ as shown in Fig~\ref{fig:multi-nopa-cfb} for $N$-NOPA systems with the even number and the odd number of NOPAs. 
Moreover, we look into entanglement of two-mode Gaussian states related to pairs of optical cavity modes. 
In this section, there are some qualitative findings such as in the absence of losses, the degree of end-to-end entanglement  increases as the amplitude of pump beam for each NOPA approaches the value at which the system just loses stability; in the presence of losses, there exists an optimal value of pump amplitude of NOPAs for the end-to-end entanglement;  the entanglement generated between collective modes is different between systems with an even and odd number of NOPAs; with the same pump power and without losses, increasing the number of NOPAs improves the end-to-end entanglement but not the entanglement of internal two-mode Gaussian states. The interesting qualitative observations described above are not straightforward to analyze quantitatively, so their quantitative analyses are left as topics for future research. 
Section~\ref{sec:delays} explores effects of time delays on the stability and entanglement of the systems. Finally, we restate the main results of this paper as a conclusion in Section~\ref{sec:conclusion}.

The following notations are adopted in this paper: $\imath$ denotes $\sqrt{-1}$, the transpose of a matrix of numbers or operators is denoted by ${(\cdot)}^T$, and ${(\cdot)}^*$ denotes (i) the complex conjugate of a number, (ii)  the conjugate transpose of a matrix, as well as (iii) the adjoint of an operator. $I_n$ denotes an $n$ by $n$ identity matrix,  $O_{m\times n}$ is an $m$ by $n$ zero matrix (we simply write $O_m$, if $m=n$),  trace operator is represented by $\rm{Tr}[\cdot]$, $\delta_{ij}$ denotes the Kronecker delta and $\delta(t)$ denotes the Dirac delta function. $\langle \cdot \rangle$ denotes quantum expectation and  $\overline{\sigma}(\cdot)$ denotes the largest singular value of a matrix.
%%%%%%%%%%%%%%%%%%%%%%%%%%%%%%%%%%%%%%%%%%%%%%%%%%%%%%%%%%%%%%%%%%%%%%%%%%%%%%%%%%%%%%%%%%%%%%%%%%%%%%%%%%%%%%%%%%%%%%%%%%%%%%%%%%%%%%%%%%%%%%%%%%%%%%%%%%%%
\begin{figure}[htbp]
\begin{center}
\includegraphics[scale=0.28]{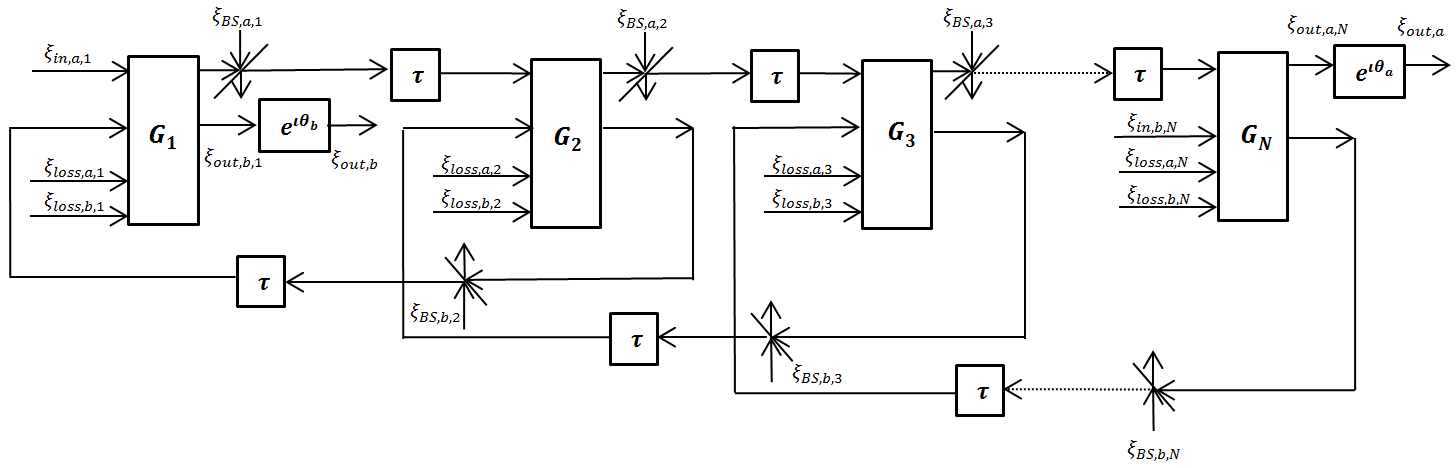}
\caption{An $N$-NOPA coherent feedback system undergoing losses and time delays.}\label{fig:multi-nopa-cfb}
\end{center}
\end{figure}
%%%%%%%%%%%%%%%%%%%%%%%%%%%%%%%%%%%%%%%%%%%%%%%%%%%%%%%%%%%%%%%%%%%%%%%%%%%%%%%%%%%%%%%%%%%%%%%%%%%%%%%%%%%%%%%%%%%%%%%%%%%%%%%%%%%%%%%%%%%%%%%%%%%%%%%%%%%%

%%%%%%%%%%%%%%%%%%%%%%%%%%%%%%%%%%%%%%%%%%%%%%%%%%%%%%%%%%%%%%%%%%%%%%%%%%%%%%%%
%%%%%%%%%%%%%%%%%%%%%%%%%%%%%%%%%%%%%%%%%%%%%%%%%%%%%%%%%%%%%%%%%%%%%%%%%%%%%%%%
%%%%%%%%%%%%%%%%%%%%%%%%%%%%%%%%%%%%%%%%%%%%%%%%%%%%%%%%%%%%%%%%%%%%%%%%%%%%%%%%
\section{System Model}
\label{sec:sys}
\noindent
In this section, a brief introduction of quantum optical devices and dynamics of the system is given. Fig.~\ref{fig:multi-nopa-cfb} shows a detailed system model of our $N$-NOPA coherent feedback network. We take account of the system undergoing losses and time delays.  The first NOPA ($G_1$) and the $N$-th NOPA ($G_N$) are placed at Alice's and Bob's, respectively; the other NOPAs are deployed in a linear line in between Alice and Bob. The length of the path between every two neighbouring NOPAs is identical.  Transmission losses are modelled by beamsplitters. The time delay in each path is denoted by a constant $\tau$. Two adjustable phase shifters with phase shifts $\theta_a$ and $\theta_b$, respectively, are placed at two outputs separately in order to obtain the best two-mode squeezing between fields $\xi_{out,a}$ and $\xi_{out,b}$.

\subsection{Quantum optical components}
\label{sec:optical-compo}
\subsubsection{NOPA}
\label{sec:nopa}
\noindent
Fig. \ref{fig:optics-compo} (a) gives a block diagram representation of a NOPA $(G_i)$. Four ingoing boson fields are in the vacuum state, among which $\xi_{loss,a,i}$ and $\xi_{loss,b,i}$ are unwanted amplification losses. The field operators comply with the commutation relations, that is, for a boson field operator $\xi_i$, we have $[\xi_i(t), \xi_j(s)^*]=\delta_{ij}\delta(t-s)$.  The main component of NOPA is a two-ended cavity with two orthogonally polarized boson modes $a_i$ and $b_i$, which obey the commutation relations $[a_i, a_j^*]=\delta_{ij}$, $[b_i, b_j^*]=\delta_{ij}$, $[a_i, b_j^*]=0$ and $[a_i, b_j]=0$. The fields $\xi_{in,a,i}$, $\xi_{in,b,i}$, $\xi_{loss,a,i}$ and $\xi_{loss,b,i}$ interact with the modes via coupling operators $L_1=\sqrt{\gamma}a_i$, $L_2=\sqrt{\gamma}b_i$, $L_3=\sqrt{\kappa}a_i$ and $L_4=\sqrt{\kappa}b_i$, respectively, where $\gamma$ and $\kappa$ are damping rates. To yield EPR entanglement between outgoing Gaussian fields $\xi_{out,a,i}$ and $\xi_{out,b,i}$, a strong pump beam in a coherent state is employed to shine the nonlinear $\chi^{(2)}$ crystal inside the cavity. The modes are coupled with the beam via the system Hamiltonian $H_{\rm sys}= \frac{\imath}{2} \epsilon\left( a_i^* b_i^*- a_ib_i\right)$, where $\epsilon$ is a real parameter related to the effective amplitude of the pumping field \cite{Ou1992, NJD2009}.

The time-varying interaction Hamiltonian between the system and its environment is $H_{\rm int}(t) = \imath (\xi_{in}(t)^*L - L^* \xi_{in}(t))$, where $\xi_{in}(t)=[\xi_{in,a,i}(t), \xi_{in,b,i}(t),\xi_{loss,a,i}(t), \xi_{loss,b,i}(t)]^T$ and $L=[L_1, L_2, L_3, L_4]^T$ \cite{GardinerBook}. The time evolution of the mode and field operators in the Heisenberg picture are given by \cite{NJD2009}
\begin{eqnarray}
a_i(t)&=&U(t)^* a_i U(t),\nonumber \\
b_i(t)&=&U(t)^* b_i U(t),\nonumber \\
\xi_{out,a,i}(t)&=&U(t)^*\xi_{in,a,i}(t)U(t), \nonumber \\
\xi_{out,a,i}(t)&=&U(t)^*\xi_{in,b,i}(t)U(t), \nonumber 
\end{eqnarray}
where $U(t)={\rm exp}^{\hspace{-0.5cm}\longrightarrow}~(-i\int_0^t H_{\rm int}(s)ds)$ is a unitary process satisfying the quantum white noise equation $\dot{U}(t)=-\imath H_{\rm int}(t)U(t)$. Using the rules of quantum stochastic calculus, the dynamics of the NOPA $G_i$ is described by the quantum Langevin equations \cite{Ou1992, NJD2009, Collett1984, Gardiner1985}
\begin{eqnarray}
\dot{a_i}\left(t\right)=&-\left(\frac{\gamma+\kappa}{2}\right)a_i\left(t\right)+\frac{\epsilon}{2}b_i^*\left(t\right)-\sqrt{\gamma}\xi_{in,a,i}\left(t\right)-\sqrt{\kappa}\xi_{loss,a,i}\left(t\right),\nonumber \\
\dot{b_i}\left(t\right)=&-\left(\frac{\gamma+\kappa}{2}\right)b_i\left(t\right)+\frac{\epsilon}{2}a_i^*\left(t\right)-\sqrt{\gamma}\xi_{in,b,i}\left(t\right)-\sqrt{\kappa}\xi_{loss,b,i}\left(t\right), \label{eq:NOPA-dynamic}
\end{eqnarray}
and the output fields are
\begin{eqnarray}
\xi_{out,a,i}\left(t\right)=&\sqrt{\gamma}a_i\left(t\right)+\xi_{in,a,i}\left(t\right),\nonumber \\
\xi_{out,b,i}\left(t\right)=&\sqrt{\gamma}b_i\left(t\right)+\xi_{in,b,i}\left(t\right). \label{eq:NOPA-output}
\end{eqnarray}

%%%%%%%%%%%%%%%%%%%%%%%%%%%%%%%%%%%%%%%%%%%%%%%%%%%%%%%%%%%%%%%%%%%%%%%%%%%%%%%%%%%%%%%%%%%%%%%%%%%%%%%%%%%%%%%%%%%%%%%%%%%%%%%%%%%%%%%%%%%%%%%%%%%%%%%%%%%
\begin{figure}[htbp]
\begin{center}
\includegraphics[scale=0.6]{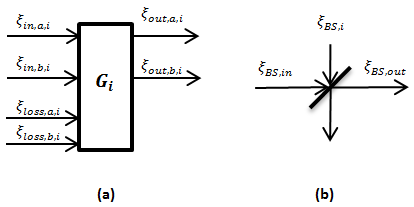}
\caption{(a) A NOPA. (b) A beamsplitter.}\label{fig:optics-compo}
\end{center}
\end{figure}
%%%%%%%%%%%%%%%%%%%%%%%%%%%%%%%%%%%%%%%%%%%%%%%%%%%%%%%%%%%%%%%%%%%%%%%%%%%%%%%%%%%%%%%%%%%%%%%%%%%%%%%%%%%%%%%%%%%%%%%%%%%%%%%%%%%%%%%%%%%%%%%%%%%%%%%%%%%

\subsubsection{Beamsplitter}
\label{sec:BS}
\noindent
The transmission loss in a path between two adjacent NOPAs is caused by loss of photons. It is modelled by a beamsplitter with transmission rate $\alpha$ and reflection rate $\beta=\sqrt{1-\alpha^2}$ \cite{NJD2009, GK2005}. As shown in Fig. \ref{fig:optics-compo} (b), output signal $\xi_{BS,out}$ of the beamsplitter is the combination of the two ingoing fields, that is, $\xi_{BS,out}=\alpha \xi_{BS,in}+ \beta\xi_{BS,i}$. In our case, $\xi_{BS,i}$ is a white-noise field operator in the ground state and $\xi_{BS,in}$ is an outgoing Gaussian field of a NOPA.

Assume the total transmission distance between Alice and Bob is $d$ kilometres. Therefore, the length of each path between every two adjacent NOPAs is $\frac{d}{N-1}$ km. Based on the assumption of that there exists around $0.2$ dB per kilometre transmission loss in optical fibre \cite{Jacobs2002}, we obtain that the transmission rate of each beamsplitter is $\alpha=10^{-\frac{0.01d}{N-1}}$.

\subsection{The $N$-NOPA coherent feedback network}
\label{sec:$N$-NOPA}
\noindent
Based on the dynamics of a NOPA and the transformation of a beamsplitter as given above, an $N$-NOPA coherent feedback system undergoing transmission losses and time delays as depicted in Fig. \ref{fig:multi-nopa-cfb}  has the following dynamics at time $t>(N-1)\tau$, when all the nodes of the network are connected, 
\small
\begin{eqnarray}
\dot{a_1}\left(t\right)&=&-\left(\frac{\gamma+\kappa}{2}\right)a_1\left(t\right)+\frac{\epsilon}{2}b^*_1\left(t\right)-\sqrt
{\gamma}\xi_{in,a,1}\left(t\right)-\sqrt{\kappa}\xi_{loss,a,1}\left(t\right),\nonumber \\
\dot{a_i}\left(t\right)&=&-\left(\frac{\gamma+\kappa}{2}\right)a_i\left(t\right)+\frac{\epsilon}{2}b^*_i\left(t\right)-\sqrt{\kappa}\xi_{loss,a,i}\left(t\right)\nonumber \\
&&-\gamma \sum\limits_{k=1}^{i-1} \alpha^k a_{i-k}(t-k\tau)-\alpha^{i-1} \sqrt{\gamma}\xi_{in,a,1}(t-(i-1)\tau) \nonumber \\
&&-\beta\sqrt{\gamma} \sum\limits_{k=1}^{i-1} \alpha^{k-1} \xi_{BS,a,i-k}(t-k\tau), \nonumber \\
\dot{b_N}\left(t\right)&=&-\left(\frac{\gamma+\kappa}{2}\right)b_N\left(t\right)+\frac{\epsilon}{2}a^*_N\left(t\right)-\sqrt{\gamma}\xi_{in,b,N}\left(t\right)-\sqrt{\kappa}\xi_{loss,b,N}\left(t\right), \nonumber \\
\dot{b_j}\left(t\right)&=&-\left(\frac{\gamma+\kappa}{2}\right)b_j\left(t\right)+\frac{\epsilon}{2}a^*_j\left(t\right)-\sqrt{\kappa}\xi_{loss,b,j}\left(t\right)\nonumber \\
&&-\gamma \sum\limits_{k=1}^{N-j} \alpha^k b_{j+k}(t-k\tau)-\alpha^{N-j} \sqrt{\gamma}\xi_{in,b,N}(t-(N-j)\tau) \nonumber \\
&&-\beta\sqrt{\gamma} \sum\limits_{k=0}^{N-j} \alpha^{k-1} \xi_{BS,b,j+k}(t-k\tau),\label{eq:N-NOPA-dynamic}
\end{eqnarray}
\normalsize
with outputs
\small
\begin{eqnarray}
\xi_{out,b}\left(t\right)&=&e^{\imath\theta_b}\left(\sqrt{\gamma}\sum\limits_{k=1}^{N} \alpha^{k-1} b_{k}(t-(k-1)\tau)+\alpha^{N-1}\xi_{in,b,N}\left(t-(N-1)\tau\right)\right.\nonumber \\
&&\left.+\beta\sum\limits_{k=1}^{N-1} \alpha^{k-1} \xi_{BS,b,k+1}(t-k\tau)\right),\nonumber \\
\xi_{out,a}\left(t\right)&=&e^{\imath\theta_a}\left(\sqrt{\gamma}\sum\limits_{k=1}^{N} \alpha^{N-k} a_{k}(t-(N-k)\tau)+\alpha^{N-1}\xi_{in,a,1}\left(t-(N-1)\tau\right)\right.\nonumber \\
&&\left.+\beta\sum\limits_{k=1}^{N-1} \alpha^{k-1} \xi_{BS,a,N-k}(t-k\tau)\right), \label{eq:N-NOPA-outputs}
\end{eqnarray}
\normalsize
where $1<i\leq N$ and $1\leq j<N$.

As a linear stochastic model, the dynamics of the system can be described by a linear quantum stochastic differential equation in the quadrature operators of the system \cite{James2008, Zhang2011}. Note that quadratures of a bosonic mode, say $a_i$, are $a_i^q=a_i+a_i^*$ and $a_i^p=-\imath a_i+\imath a_i^*$. Similarly, the quadratures of a field operator, say $\xi_i$, are $\xi_i^q=\xi_i+\xi_i^*$ and $\xi_i^p=-\imath \xi_i+\imath \xi_i^*$. Define the vectors of quadratures corresponding to the $N$-NOPA system as
\small
\begin{eqnarray}
z&=&[a^q_1, a^p_1, b^q_1, b^p_1, a^q_2, a^p_2, b^q_2, b^p_2,\cdots, a^q_N, a^p_N, b^q_N, b^p_N]^T,\nonumber \\
\xi&=&[\xi^q_{in,a,1},\xi^p_{in,a,1},\xi^q_{in,b,N},\xi^p_{in,b,N}, \xi^q_{loss,a,1},\xi^p_{loss,a,1},\nonumber\\
&&\xi^q_{loss,b,1},\xi^p_{loss,b,1},\xi^q_{loss,a,2},\xi^p_{loss,a,2},\xi^q_{loss,b,2},\xi^p_{loss,b,2},\cdots, \nonumber\\
&&\xi^q_{loss,a,N},\xi^p_{loss,a,N},\xi^q_{loss,b,N},\xi^p_{loss,b,N},\xi^q_{BS,a,1},\xi^p_{BS,a,1},\nonumber\\
&&\xi^q_{BS,a,2},\xi^p_{BS,a,2},\xi^q_{BS,b,2},\xi^p_{BS,b,2},\cdots, \xi^q_{BS,a,N-1},\xi^p_{BS,a,N-1},\nonumber\\
&&\xi^q_{BS,b,N-1},\xi^p_{BS,b,N-1}, \xi^q_{BS,b,N},\xi^p_{BS,b,N}]^T,\nonumber\\
\xi_{out}&=&[\xi^q_{out,a},\xi^p_{out,a},\xi^q_{out,b},\xi^p_{out,b}]^T. \label{eq:inputs}
\end{eqnarray}
\normalsize
In the absence of time delays, the system dynamics is of the form
\begin{eqnarray}
\dot{z}\left(t\right) &=& A_N z\left(t\right)+ B_N \xi\left(t\right), \label{eq:N-NOPA-AB} \\
\xi_{out}\left(t\right) &=& C_N z\left(t\right)+D_N \xi\left(t\right).\label{eq:N-NOPA-CD}
\end{eqnarray}

For the $N$-NOPA system, the covariance matrix of its $2N$-mode Gaussian state is
\begin{eqnarray}
P_N(t)=\frac{1}{2} {\rm Tr} \left(\rho(0)\left( z(t)z(t)^T+\left( z(t)z(t)^T\right)^T\right)\right),  ~ P_N(0)=P_0, \label{eq:cvm}
\end{eqnarray}
where $\rho(0)$ is the initial density operator of the system. Furthermore, when time delays are neglected, $P_N(t)$ satisfies the Lyapunov matrix differential equation
\begin{eqnarray}
\frac{dP_N(t)}{dt}=A_NP_N(t)+P_N(t)A_N^T+B_NB_N^T. \label{eq:LDE}
\end{eqnarray}
Specially, the steady-state covariance matrix $P_N=\lim_{t\rightarrow \infty}P_N(t)$ satisfies the Lyapunov equation \cite{Yamamoto2008, Nurdin2012}
\begin{eqnarray}
A_NP_N+P_NA_N^T+B_NB_N^T=0. \label{eq:LDE_steady}
\end{eqnarray}

Define the parameters of the system as follows. For each NOPA, we set $\epsilon=x\gamma_r$ Hz and $\gamma= \frac{\gamma_r}{y}$ Hz, where $x$ ($0<x \leq 1$) and $y$ ($0<y \leq 1$) are real parameters, and $\gamma_r=7.2 \times 10^7$ Hz is a reference value for the transmissivity mirrors of the NOPA. Following \cite{SNQIP, Iida2012}, we assume that $\kappa= \frac{3 \times 10^6}{\sqrt{2}}$ when $\epsilon=0.6 \gamma_r$  and the value of $\kappa$ is proportional to the absolute value of $\epsilon$, so we set $\kappa=\frac{3 \times 10^6}{\sqrt{2} \times 0.6}x$  \cite{SNQIP, Iida2012}.
Suppose that the total transmission distance is $1$ km, then the transmission rate of each beamsplitter is $\alpha=10^{-\frac{0.01}{N-1}}$ and time delay $\tau$ in the path between any two adjacent NOPAs is around $\frac{10^{-5}}{3(N-1)}$. The range of the adjustable phase shifts $\theta_a$ and $\theta_b$ is $(-\pi,\pi]$.

%%%%%%%%%%%%%%%%%%%%%%%%%%%%%%%%%%%%%%%%%%%%%%%%%%%%%%%%%%%%%%%%%%%%%%%%%%%%%%%%
%%%%%%%%%%%%%%%%%%%%%%%%%%%%%%%%%%%%%%%%%%%%%%%%%%%%%%%%%%%%%%%%%%%%%%%%%%%%%%%%
%%%%%%%%%%%%%%%%%%%%%%%%%%%%%%%%%%%%%%%%%%%%%%%%%%%%%%%%%%%%%%%%%%%%%%%%%%%%%%%%
\section{Stability analysis}
\label{sec:stability}
\noindent
This section is devoted to the stability analysis of our $N$-NOPA system. If the system is unstable, the mean total photon number in the cavity modes is continuously growing, which is undesirable. By stability, we mean that the mean of the quadrature vector $z(t)$ becomes a zero vector as time approaches infinity, namely, $\langle z(\infty)\rangle =0$, and Eq.~(\ref{eq:LDE_steady}) has a unique solution \cite{Yamamoto2014}. The system is stable when $A_N$ in Eq.~(\ref{eq:N-NOPA-AB}) is Hurwitz, that is,  all the eigenvalues of $A_N$ have real negative parts. We are interested in the range of $x$ over which stability is assured. However, unlike the dual-NOPA system studied in our previous work \cite{SNQIP}, it is infeasible to obtain an explicit expression for the stability threshold $x_{th}$  at which the system just loses stability by checking the Hurwitz property of $A_N$. Here, by regarding $x$ as an uncertainty, we employ the $\mu$-analysis method from $H^{\infty}$ control theory \cite{Zhou1996} to obtain the following lemma. 
\begin{lemma}
\label{le:stability}
In the absence of time delays, an $N$-NOPA coherent feedback system is stable if and only if $\det(\imath \omega I-A_N(x))\neq0$ $\forall \omega \in {\mathcal R}$ and $\forall x \in [0,x_{th})$, where $0 < x_{th}\leq1$.
\end{lemma}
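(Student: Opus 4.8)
The plan is to read the lemma as a robust-stability statement for the matrix family $A_N(x)$ and to prove it by combining the continuity of the eigenvalues in $x$ with a Hurwitz base case at $x=0$. First I would record the structural fact that makes the argument go through: since $\epsilon=x\gamma_r$ and $\kappa$ are both proportional to $x$ while $\gamma=\gamma_r/y$ and $\alpha$ do not depend on $x$, every entry of $A_N$ is an affine function of the parameter. Hence $A_N(x)=A_N(0)+x\tilde A$ for a constant matrix $\tilde A$, the characteristic polynomial of $A_N(x)$ has coefficients that are polynomials in $x$, and the eigenvalues of $A_N(x)$ vary continuously with $x$. This is exactly the affine real-parametric-uncertainty setting in which the $\mu$-analysis machinery of $H^\infty$ theory applies, so I would read ``stable'' in the statement as robust stability of $A_N(x)$ over the whole operating segment $x\in[0,x_{th})$.

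Next I would establish the base case that $A_N(0)$ is Hurwitz. At $x=0$ we have $\epsilon=\kappa=0$, so in Eq.~(\ref{eq:N-NOPA-dynamic}) the $a$-modes decouple from the $b$-modes and from their own conjugates, and in quadrature coordinates the $q$- and $p$-channels separate as well. Ordering the modes by index, the drift for the $a$-channel is block lower-triangular, because each $\dot a_i$ depends only on $a_1,\dots,a_i$ through the feedback terms $-\gamma\sum_k\alpha^k a_{i-k}$, with the common diagonal entry $-\gamma/2<0$; the $b$-channel is the mirror-image upper-triangular block. Every eigenvalue of $A_N(0)$ therefore equals $-\gamma/2<0$, so $A_N(0)$ is Hurwitz.

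With these facts in hand, both implications follow. Define $x_{th}$ as the smallest $x>0$ at which $A_N(x)$ fails to be Hurwitz; continuity of the eigenvalues together with the Hurwitz base case (an open condition) gives $x_{th}>0$, and the polynomial characterisation of the threshold established in Section~\ref{sec:stability} supplies $x_{th}\le1$. For the forward direction, if $A_N(x)$ is Hurwitz for every $x\in[0,x_{th})$ then it has no eigenvalue on the imaginary axis, so $\det(\imath\omega I-A_N(x))\neq0$ for all $\omega\in\mathcal{R}$ and all such $x$. For the converse I would argue by contradiction: if some $x^\ast\in(0,x_{th})$ admitted an eigenvalue with nonnegative real part, then since $A_N(0)$ is Hurwitz and the eigenvalues move continuously along the connected segment $[0,x^\ast]$, an intermediate-value argument would produce $x'\in(0,x^\ast]$ and $\omega'\in\mathcal{R}$ for which $A_N(x')$ has the purely imaginary eigenvalue $\imath\omega'$, i.e.\ $\det(\imath\omega' I-A_N(x'))=0$, contradicting the hypothesis. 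Hence $A_N(x)$ is Hurwitz throughout $[0,x_{th})$.

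I expect the genuinely delicate points to be the two endpoint claims rather than the crossing argument itself. Verifying that $A_N(0)$ is Hurwitz rests on correctly reading off the triangular mode-ordering induced by the feedback terms, and the bound $x_{th}\le1$ is not automatic from continuity: it requires exhibiting an actual imaginary-axis crossing within the physical pump range, which is precisely where the polynomial (and $\mu$-analysis) characterisation of $x_{th}$ does the real work, since the destabilising effect of the coherent feedback is what can pull the threshold below the single-NOPA value. The continuity-plus-connectedness core of the equivalence is then routine.
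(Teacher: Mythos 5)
Your proposal is correct in substance but takes a genuinely different route from the paper. The paper's proof splits $A_N(x)=\tilde{A}_N+\Delta_N(x)$, treats the pump-coupling block $\Delta_N(x)=I_N\otimes\Delta_0(x)$ as a structured uncertainty fed back around the nominal plant $G(s)=(sI-\tilde{A}_N)^{-1}$, and invokes the structured-singular-value robust stability theorem (Theorem 11.8 of \cite{Zhou1996}): after checking $\det(\lambda I-\tilde{A}_N)=(\lambda+\frac{\gamma+\kappa}{2})^{4N}$ so that $G$ is stable, and computing $\|\Delta_N(x)\|_\infty=\frac{\epsilon}{2}\leq\frac{\gamma_r}{2}$, the $\mu$ condition reduces to $\det(I-G(\imath\omega)\Delta_N(x))\neq 0$, which by the identity $\det(I-G(\imath\omega)\Delta_N(x))=\det(G(\imath\omega))\det(\imath\omega I-A_N(x))$ yields the lemma. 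You instead run an elementary eigenvalue-continuity argument: Hurwitz base case at $x=0$, plus an intermediate-value crossing argument showing an eigenvalue can only reach the closed right half-plane on the connected segment $[0,x_{th})$ by passing through the imaginary axis. Note that your base case is secretly the same computation as the paper's: your observation that the feedback terms make the $a$-channel lower-triangular and the $b$-channel upper-triangular in mode order is exactly what gives the paper $\det(\lambda I-\tilde{A}_N)=(\lambda+\frac{\gamma+\kappa}{2})^{4N}$. What each route buys: yours is self-contained and needs no robust-control machinery, and it makes transparent why ``no imaginary-axis eigenvalue on the segment'' is equivalent to ``Hurwitz on the segment''; the paper's route additionally sets up the structured-uncertainty formalism that is reused computationally (the \emph{mussv} bisection of Appendix 1), which your argument does not provide. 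Your remark that affine dependence of $A_N$ on $x$ is needed is harmless but inessential for your argument --- continuity of the entries suffices.

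One concrete blemish: you justify $x_{th}\leq 1$ by appealing to ``the polynomial characterisation of the threshold established in Section~\ref{sec:stability}'', but that characterisation is Theorem~\ref{th:stability}, which the paper proves \emph{from} Lemma~\ref{le:stability}; as written this is circular. In fact $x_{th}\leq 1$ requires no crossing argument at all: the physical pump parameter is constrained to $x\in(0,1]$ by definition (this same cap is what bounds the uncertainty norm by $\frac{\gamma_r}{2}$ in the paper's $\mu$ step), so you should simply set $x_{th}:=1$ when no loss of the Hurwitz property occurs in $(0,1]$, and otherwise take the first crossing point, whose positivity your continuity-plus-openness observation already gives. With that repair your argument is complete and correct.
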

\begin{proof}
 First, for convenience, let us define the following matrices,
\begin{eqnarray}
A_0=-\frac{\gamma+\kappa}{2} I_{4N},
\Delta_0(x)=\gamma_r\left[\begin{array}{cccc} 0 & 0 & \frac{x}{2} & 0\\ 
                              0 & 0 & 0 & -\frac{x}{2}\\ 
                              \frac{x}{2} & 0 & 0 & 0 \\
                              0 & -\frac{x}{2} & 0 & 0
\end{array}\right],\nonumber\\
~~~A_b=\left[\begin{array}{cccc} 0 & 0 & 0 & 0\\ 
                              0 & 0 & 0 & 0\\ 
                              0 & 0 & -\gamma & 0 \\
                              0 & 0 & 0 & -\gamma
\end{array}\right],
A_a=\left[\begin{array}{cccc} -\gamma & 0 & 0 & 0\\ 
                              0 & -\gamma & 0 & 0\\ 
                              0 & 0 & 0 & 0 \\
                              0 & 0 & 0 & 0
\end{array}\right].
\label{eq:A0AaAbDelta} 
\end{eqnarray}
Recall that $\epsilon=x\gamma_r$, then from Eq.~(\ref{eq:N-NOPA-dynamic}) and Eq.~(\ref{eq:N-NOPA-AB}) the matrix $A_N(x)$ of an $N$-NOPA system is a $4N \times 4N$ real matrix given by
\begin{eqnarray}
A_N(x)=\left[\begin{array}{ccccc} A_0& \alpha A_b &\alpha^2 A_b & \cdots & \alpha^{N-1} A_b\\ 
                           \alpha A_a & A_0& \alpha A_b & \cdots & \alpha^{N-2} A_b \\ 
                             \vdots & \vdots & \ddots & \vdots & \vdots \\
                             \alpha^{N-2}A_a &  \cdots & \alpha A_a&A_0& \alpha A_b \\ 
                             \alpha^{N-1}A_a &\cdots & \alpha^2 A_a &\alpha A_a & A_0
                             \end{array}\right]+\Delta_N(x),
\label{eq:A} 
\end{eqnarray}
where $\Delta_N(x)=I_N \otimes \Delta_0(x) \nonumber$. To proceed, we now investigate the robust stability condition of a certain feedback system, as shown in Fig.\ref{fig:stability-sys}. The state space of the system is given by
\begin{eqnarray}
\dot{\tilde{z}}&=&\tilde{A}_N\tilde{z}+u, \nonumber \\
y&=&\tilde{z},\label{eq:system}
\end{eqnarray}
where $\tilde{A}_N=A_N(x)-\Delta_N(x)$ and $\tilde{z}$ is the state vector. Hence, the transfer function of the system is  $G(s)=(s I - \tilde{A}_N)^{-1}$. As long as the stability condition of this system holds, our $N$-NOPA coherent feedback system is stable. 
%%%%%%%%%%%%%%%%%%%%%%%%%%%%%%%%%%%%%%%%%%%%%%%%%%%%%%%%%%%%%%%%%%%%%%%%%%%%%%%%%%%%%%%%%%%%%%%%%%%%%%%%%%%%%%%%%%%%%%%%%%%%%%%%%%%%%%%%%%%%%%%%%%%%%%%%%%%
\begin{figure}[htbp]
\begin{center}
\includegraphics[scale=0.6]{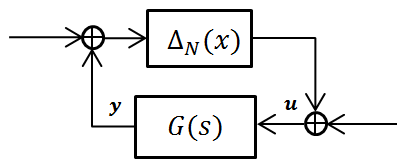}
\caption{A feedback system with transfer function $G(s)$ and uncertainty $\Delta_N(x)$.}\label{fig:stability-sys}
\end{center}
\end{figure}
%%%%%%%%%%%%%%%%%%%%%%%%%%%%%%%%%%%%%%%%%%%%%%%%%%%%%%%%%%%%%%%%%%%%%%%%%%%%%%%%%%%%%%%%%%%%%%%%%%%%%%%%%%%%%%%%%%%%%%%%%%%%%%%%%%%%%%%%%%%%%%%%%%%%%%%%%%%

According to Theorem 11.8 in \cite{Zhou1996}, if the system (\ref{eq:system}) with transfer function $G(s)$ is stable, then the closed-loop system in Fig.~\ref{fig:stability-sys} with structured uncertainty $\Delta_N(x)$ satisfying $\|\Delta_N(x)\|_\infty \leq \frac{1}{\eta}$ ($\eta >0$) is internally stable if and only if 
\begin{eqnarray}
\sup\limits_{\omega \in {\mathcal R}} \mu(G(\imath \omega))<\eta, \label{eq:robust-stability-cond}
\end{eqnarray}
 where  
\begin{eqnarray}
\mu(G(\imath \omega))=\left\{ \begin{array}{ll}
\frac{1}{\min \left\lbrace \overline{\sigma}(\Delta_N(x))\right\rbrace} &{\rm for} \det(I-G(\imath \omega)\Delta_N(x))=0, \forall x \in (0, x_{th})\\ 
0, &{\rm for} \det(I-G(\imath \omega)\Delta_N(x))\neq 0, \forall x \in (0, x_{th}).
\end{array}\right. \nonumber\\ \label{eq:mu}
\end{eqnarray}

First, we check the stability of the system (\ref{eq:system}). Using  (\ref{eq:A0AaAbDelta}) and (\ref{eq:A}), we obtain that $\det(\lambda I-\tilde{A}_N)=(\lambda +\frac{\gamma+\kappa}{2})^{4N}$. Hence $G$ is stable because $\det(\lambda I-\tilde{A}_N)$ has all its zeros in the left half plane by virtue of the fact that $\gamma >0$ and $\kappa \geq 0$.
Noting that $0 < x \leq 1$, we have $\|\Delta_N(x)\|_\infty=\max\limits_{1\leq i \leq 4N} \sum\limits_{j=1}^{4N}|\Delta_{N}(x)_{ij}|=\frac{\epsilon}{2}\leq\frac{\gamma_r}{2}$, therefore, $\eta=\frac{2}{\gamma_r}$. Since $\frac{1}{\overline{\sigma}(\Delta_N(x))}=\frac{2}{\epsilon}\geq\eta$ $\forall \omega \in {\mathcal R}$, for (\ref{eq:robust-stability-cond}) to be fulfilled we must have that $\det(I-G(\imath \omega)\Delta_N(x)) \neq 0$ $\forall x \in (0,x_{th})$.  Since $\det(I-G(\imath \omega)\Delta_N(x))=\det(G(\imath \omega))\det(\imath \omega I -A_N(x))$ and $\det (G(\imath \omega))= (\imath \omega + \frac{\gamma+\kappa}{2})^{4N} \neq 0$, we obtain Lemma~\ref{le:stability}. 
\end{proof}

Lemma~\ref{le:stability} shows that the system is stable as long as $A_N(x)$ does not have any purely imaginary eigenvalues. In fact, we shall prove that $A_N(x)$ has no eigenvalues on the imaginary axis, which leads to the following theorem.
\begin{theorem}
\label{th:stability}
In the absence of time delays, an $N$-NOPA coherent feedback system is stable if and only if $0<x<x_{th}$, where $x_{th}\leq 1$ is the smallest positive root of the polynomial $\det\left(A_N(x)\right)$.
\end{theorem}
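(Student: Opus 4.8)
The plan is to combine Lemma~\ref{le:stability} with a continuity argument in the pump parameter $x$. Since $A_N(0)$ is block-triangular with all diagonal entries equal to $-\tfrac{\gamma+\kappa}{2}<0$, it is Hurwitz, and the eigenvalues of $A_N(x)$ vary continuously with $x$; hence stability can only be lost when an eigenvalue first reaches the imaginary axis. I would therefore show that the only point of the imaginary axis that an eigenvalue of $A_N(x)$ can ever reach is the origin. Granting this, stability is lost exactly when $0$ becomes an eigenvalue, i.e.\ when $\det(A_N(x))=0$, so $x_{th}$ is the smallest positive root of $\det(A_N(x))$.

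First I would exploit the block structure. In the quadrature ordering the ``position'' variables $\{a_i^q,b_i^q\}$ decouple from the ``momentum'' variables $\{a_i^p,b_i^p\}$, because $A_0,A_a,A_b$ are diagonal while $\Delta_0(x)$ only couples $a^q$ to $b^q$ and $a^p$ to $b^p$. Conjugating the momentum block by $\mathrm{diag}(I_N,-I_N)$ shows the two blocks are similar, so $\mathrm{spec}(A_N(x))$ coincides, up to doubled multiplicity, with that of the $2N\times 2N$ matrix $A_N^q(x)=\begin{pmatrix}L & cI_N\\ cI_N & L^{T}\end{pmatrix}$, where $c=\tfrac{x\gamma_r}{2}$ and $L$ is lower-triangular with diagonal $-\tfrac{\gamma+\kappa}{2}$ and entries $L_{ij}=-\gamma\alpha^{i-j}$ for $i>j$. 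The structural fact I would emphasise, which stems from the symmetric geometric attenuation $\alpha^{|i-j|}$ along the chain in both directions, is that the $b$-block is exactly $L^{T}$; note also that $L$ is invertible since its diagonal is nonzero.

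The main step is a reformulation of the imaginary-eigenvalue condition. For $\omega\neq0$ and $c>0$, eliminating $v=-c^{-1}(L-\imath\omega I)u$ from $A_N^q(x)\begin{pmatrix}u\\ v\end{pmatrix}=\imath\omega\begin{pmatrix}u\\ v\end{pmatrix}$ shows that $\imath\omega$ is an eigenvalue of $A_N^q(x)$ if and only if $c^{2}$ is an eigenvalue of $M^{T}M$, where $M:=L-\imath\omega I$. Introducing $T=\begin{pmatrix}0 & M\\ M^{T} & 0\end{pmatrix}$, whose square is $\mathrm{diag}(MM^{T},M^{T}M)$, this is in turn equivalent to $\pm c\in\mathrm{spec}(T)$. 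But $T=S-\imath\omega I_{2N}$ with $S=\begin{pmatrix}0 & L\\ L^{T} & 0\end{pmatrix}$, and $S$ is real and symmetric, hence has real spectrum. Thus $\pm c+\imath\omega\in\mathrm{spec}(S)\subset\mathbb{R}$, forcing $\omega=0$---a contradiction. The step I expect to be the crux is precisely spotting this reduction to the real symmetric matrix $S$ (equivalently, realising that $A_N(x)$ can meet the imaginary axis only at the origin); once it is in hand the remainder is routine.

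Finally I would assemble the pieces. Since $c I_N$ commutes with $L^{T}$, $\det(A_N^q(x))=\det(LL^{T}-c^{2}I)=\prod_k(\sigma_k^{2}-c^{2})$ with $\sigma_k$ the singular values of $L$, so $\det(A_N(x))=\prod_k(\sigma_k^{2}-c^{2})^{2}$. By the absence of imaginary-axis crossings together with continuity from the Hurwitz matrix $A_N(0)$, $A_N(x)$ remains Hurwitz exactly while $c<\sigma_{\min}(L)$; at $c=\sigma_{\min}(L)$ the determinant first vanishes and a real eigenvalue crosses the origin into the right half-plane, causing instability. Hence the system is stable if and only if $0<x<x_{th}$, with $x_{th}=2\sigma_{\min}(L)/\gamma_r$ the smallest positive root of $\det(A_N(x))$. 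The bound $x_{th}\leq1$ reduces to $\sigma_{\min}(L)\leq\gamma_r/2$, which I would confirm from the sign of $\det(LL^{T}-(\gamma_r/2)^2 I)$ for the configurations under study.
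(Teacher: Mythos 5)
Your setup is sound and in fact lands exactly where the paper's proof does: after decoupling the $q$- and $p$-sectors (the paper's conjugation by the permutation $L$ in Eq.~(\ref{eq:pc1})) and eliminating $v$, the problem reduces to showing $\det\bigl(M^{T}M-c^{2}I_{N}\bigr)\neq 0$ for $\omega\neq 0$, which is precisely the paper's condition $\det\bigl(A_u(\imath\omega)A_l(\imath\omega)-n(x)^2I_N\bigr)\neq 0$, since $A_u(\imath\omega)=-M$ and $A_l=A_u^{T}$. But the step you yourself flag as the crux is wrong. With $M=L-\imath\omega I_N$ you have
\begin{equation*}
T=\begin{pmatrix}0 & M\\ M^{T} & 0\end{pmatrix}
=\begin{pmatrix}0 & L\\ L^{T} & 0\end{pmatrix}-\imath\omega\begin{pmatrix}0 & I_N\\ I_N & 0\end{pmatrix}
= S-\imath\omega J,
\end{equation*}
not $S-\imath\omega I_{2N}$: the $-\imath\omega$ terms sit inside the \emph{off-diagonal} blocks. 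Since $J$ does not commute with $S$ unless $L=L^{T}$, the spectrum of $T$ is not a shift of $\mathrm{spec}(S)$; $T$ is merely complex symmetric, and complex symmetric matrices carry no spectral constraint. The failure is not cosmetic: your argument uses no property of $L$ beyond realness, but at that generality the conclusion is false. Take $L=\begin{pmatrix}0 & -1\\ 1 & 0\end{pmatrix}$; then $M^{T}M=(-L-\imath\omega I)(L-\imath\omega I)=-(L^{2}+\omega^{2}I)=(1-\omega^{2})I_2$, so for any $0<\omega<1$ and $c=\sqrt{1-\omega^{2}}$ your own (correct) equivalence shows $\begin{pmatrix}L & cI\\ cI & L^{T}\end{pmatrix}$ has the nonzero imaginary eigenvalue $\imath\omega$. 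So ``eigenvalues can reach the axis only at the origin'' genuinely depends on the structure of $L$, and your proposal never invokes it.

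What the paper does at this juncture is exploit exactly that structure: for its lower-triangular $L$ with diagonal $-m$ and geometrically weighted entries $-\gamma\alpha^{i-j}$, it sets $\lambda=\imath\omega$ in the symmetric matrix (\ref{eq:AuAl_n2I}) and performs explicit elementary column operations to reach the reduced matrix $F$ of (\ref{eq:AuAl_x2I_imath_reduced}), whose columns are shown to be linearly independent for $\omega\neq0$. If you want to salvage your more conceptual route, note that $MM^{T}-c^{2}I=\bigl(LL^{T}-(\omega^{2}+c^{2})I\bigr)-\imath\omega\bigl(L+L^{T}\bigr)$ is a pencil of two \emph{real symmetric} matrices, so a singular vector $u\neq0$ must satisfy both $u^{*}\bigl(LL^{T}-(\omega^{2}+c^{2})I\bigr)u=0$ and $\omega\,u^{*}\bigl(L+L^{T}\bigr)u=0$; strict negative definiteness of $L+L^{T}$ would then finish the argument, and here $-(L+L^{T})=\kappa I+\gamma\bigl[\alpha^{|i-j|}\bigr]_{i,j}$ is positive definite whenever $\kappa>0$ or $0<\alpha<1$ (Kac--Murdock--Szeg\H{o} matrix). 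The ideal case $\alpha=1$, $\kappa=0$, where $L+L^{T}=-\gamma\mathbf{1}\mathbf{1}^{T}$ is only semidefinite, would still need a separate treatment, whereas the paper's column-operation argument covers it. Your remaining bookkeeping (continuity from the Hurwitz matrix $A_N(0)$, the factorization $\det A_N(x)=\prod_k(\sigma_k^{2}-c^{2})^{2}$, hence $x_{th}=2\sigma_{\min}(L)/\gamma_r$) is fine once the missing lemma is supplied.
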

\vspace*{12pt}
\begin{proof}
For convenience, we define $m=\frac{\gamma+\kappa}{2}$, $n(x)=\frac{x\gamma_r}{2}$ and an invertible matrix 
\begin{eqnarray}
L&=&\left[\begin{array}{c} 
I_N\otimes [\begin{array}{cccc}1&0&0&0\end{array}]\\
I_N\otimes [\begin{array}{cccc}0&0&1&0\end{array}]\\
I_N\otimes [\begin{array}{cccc}0&1&0&0\end{array}]\\
I_N\otimes [\begin{array}{cccc}0&0&0&1\end{array}]
\end{array}\right].\label{eq:L}
\end{eqnarray}
Exploiting (\ref{eq:A}), the characteristic polynomial of $A_N(x)$ is
\begin{eqnarray}
p_c(\lambda,x)&=&\det \left(\lambda I_{4N}-A_N(x)\right) \nonumber\\
&=&\det \left(L\left(\lambda I_{4N}-A_N(x)\right) L^{-1}\right)\nonumber\\
&=&\det\left[\begin{array}{cccc}
A_u(\lambda)&-n(x)I_N&O_N&O_N\\
-n(x)I_N&A_l(\lambda)&O_N&O_N\\
O_N&O_N&A_u(\lambda)&n(x)I_N\\
O_N&O_N&n(x)I_N&A_l(\lambda)
\end{array}\right], \label{eq:pc1}
\end{eqnarray}
where $A_u(\lambda)$ and $A_l(\lambda)$ are $N \times N$ matrices given by 
\begin{eqnarray}
A_u(\lambda)&=&\left[\begin{array}{ccccc}
 \lambda+m& 0 & 0 & \cdots & 0\\ 
\alpha\gamma& \lambda+m & 0 & \cdots & 0 \\                              \vdots & \vdots & \ddots & \vdots & \vdots \\                            \alpha^{N-2}\gamma&  \cdots & \alpha\gamma&\lambda+m& 0 \\ 
\alpha^{N-1}\gamma&\cdots & \alpha^{2}\gamma&\alpha\gamma& \lambda+m
\end{array}\right], \nonumber\\
A_l(\lambda)&=&\left[\begin{array}{ccccc} 
 \lambda+m&\alpha\gamma & \alpha^{2}\gamma & \cdots & \alpha^{N-1}\gamma\\ 
0&  \lambda+m&\alpha\gamma& \cdots & \alpha^{N-2}\gamma \\ 
\vdots & \vdots & \ddots & \vdots & \vdots \\
0&  \cdots & 0& \lambda+m&\alpha\gamma \\ 
0&\cdots &0&0&  \lambda+m
\end{array}\right].
\label{eq:AuAl} 
\end{eqnarray}
Since $\det\left[\begin{array}{cc}A&B\\C&D\end{array}\right]=\det(AC-BD)$ for any square matrices $A, B, C, D$ such that $CD=DC$ and $D$ is invertible, we obtain 
\begin{eqnarray}
p_c(\lambda,x)=\left(\det \left(A_u(\lambda)A_l(\lambda) - n(x)^2 I_N\right)\right)^2,
\label{eq:pc2} 
\end{eqnarray}
and  $A_u(\lambda)A_l(\lambda) - n(x)^2 I_N$ is a symmetric matrix given by
\scriptsize
\begin{eqnarray}
&&A_u(\lambda)A_l(\lambda) - n(x)^2 I_N\nonumber\\
&=&\left[\begin{array}{ccccc} 
(\lambda+m)^2-n(x)^2&\alpha\gamma(\lambda+m) & \alpha^2\gamma(\lambda+m) &\cdots &\alpha^{N-1}\gamma(\lambda+m)\\ 
\alpha\gamma(\lambda+m) &(\lambda+m)^2-n(x)^2+l_{2,2}& \alpha\left(\alpha\gamma(\lambda+m)+l_{2,2}\right) &\cdots & \alpha^{N-2}\left(\alpha\gamma(\lambda+m)+l_{2,2}\right)\\ 
\alpha^2\gamma(\lambda+m)&\alpha\left(\alpha\gamma(\lambda+m)+l_{2,2}\right) & (\lambda+m)^2-n(x)^2+l_{3,3} &\cdots &\alpha^{N-3}\left(\alpha\gamma(\lambda+m)+l_{3,3}\right)\\ 
\vdots & \vdots & \vdots &\ddots & \vdots \\
\alpha^{N-2}\gamma(\lambda+m)&\alpha^{N-3}\left(\alpha\gamma(\lambda+m)+l_{2,2}\right) &\alpha^{N-4}\left(\alpha\gamma(\lambda+m)+l_{3,3}\right)&\cdots &\alpha\left(\alpha\gamma(\lambda+m)+l_{N-1,N-1}\right)\\
\alpha^{N-1}\gamma(\lambda+m)&\alpha^{N-2}\left(\alpha\gamma(\lambda+m)+l_{2,2}\right) &\alpha^{N-3}\left(\alpha\gamma(\lambda+m)+l_{3,3}\right)&\cdots & (\lambda+m)^2-n(x)^2+l_{N,N}
\end{array}\right], \nonumber\\\label{eq:AuAl_n2I} 
\end{eqnarray}
\normalsize
where $l_{j,j}=\gamma^2\sum\limits_{k=1}^{j-1}\alpha^{2k}$. Let us define the $i_{th}$ column of (\ref{eq:AuAl_n2I}) as $c_i$. Let $\lambda=\imath\omega$ for any non-zero  $\omega \in {\mathcal R}$  and apply the following elementary column operations: $c_1-\frac{2m}{\alpha^{N-1}\gamma}c_N \rightarrow c_1$ and $c_j-\frac{1}{\alpha^{N-j}\gamma}c_N \rightarrow c_j$ for $1<j<N$. Thus, the matrix (\ref{eq:AuAl_n2I}) is reduced to a matrix $F$ given by
\small
\begin{eqnarray}
F=\left[\begin{array}{cccccc} 
-m^2-n(x)^2-\omega^2&0&0 &\cdots &0&\alpha^{N-1}\gamma(\imath\omega+m)\\ 
f_{2,1}&f& 0 &\cdots &0& \alpha^{N-2}\left(\alpha\gamma(\imath\omega+m)+l_{2,2}\right)\\ 
f_{3,1}&f_{3,2}&f&\cdots &0&\alpha^{N-3}\left(\alpha\gamma(\imath\omega+m)+l_{3,3}\right)\\ 
\vdots & \vdots & \vdots &\ddots & \vdots & \vdots \\
f_{N-1,1}&f_{N-1,2}&f_{N-1,3}&\cdots &f &\alpha\left(\alpha\gamma(\imath\omega+m)+l_{N-1,N-1}\right)\\
f_{N,1}&f_{N,2} &f_{N,3} &\cdots & f_{N,N-1}& (\imath\omega+m)^2-n(x)^2+l_{N,N}
\end{array}\right],
\label{eq:AuAl_x2I_imath_reduced} 
\end{eqnarray}
\normalsize
where $f=m^2-n(x)^2-\omega^2-\gamma m+\imath \omega(2m-\gamma)$ and $f_{i,j}$ $ (1<i<N, 1\leq j<i)$ is a complex number.
As  $\omega \neq 0$ and $-m^2-n(x)^2-\omega^2$ is a negative real number, it can be found that the columns of the matrix $F$ are linearly independent, that is, for a vector $\left[v_1,v_2,\cdots,v_N \right] \in {\mathcal R}^N$, the equality
\begin{eqnarray}
F\left[\begin{array}{c} 
v_1\\v_2\\\vdots\\v_N
\end{array}\right]=0,
\end{eqnarray}
is true if and only if $v_1=v_2=\cdots=v_{N}=0$. Thus, the matrix $A_u(\lambda)A_l(\lambda) - n(x)^2 I_N$ has full rank, which leads to $\det\left(A_u(\lambda)A_l(\lambda) - n(x)^2 I_N\right)\neq 0$ when $\omega \neq 0$. Consequently, $\det(\imath \omega I-A_N)\neq0$ $\forall$ non-zero $\omega \in {\mathcal R}$.  Following Lemma~\ref{le:stability}, we obtain the theorem.
\end{proof}

In the rest of paper,  we shall numerically analyze stability and entanglement performance of the $N$-NOPA coherent feedback system. To this end, we now set $y=1$, that is, $\gamma=\gamma_r$. Based on Theorem~\ref{th:stability}, with the help of Mathematica, we get Fig.~\ref{fig:stability} that plots the values of the stability threshold $ x_{th}$ of our $N$-NOPA systems ($2\leq N \leq 20$) in the absence of losses (black circles), with transmission losses only (blue crosses) and with both transmission and amplification losses (red plus signs). The values of $x_{th}$ of the $N$-NOPA systems ($2\leq N \leq 6$) are listed in Table~\ref{tb:stability}. It is indicated that the value of the stability threshold $x_{th}$ decreases as more NOPAs are added to the network.  The rate  of decrease becomes smaller as the number of NOPAs grows. Moreover, existence of amplification and transmission losses broadens the range of $x$ over which stability is guaranteed. Notice from the figure that the effect of transmission and amplification losses on $x_{th}$ diminishes for higher values of $N$.
\begin{figure}[htbp]
\begin{center}
\includegraphics[scale=0.36]{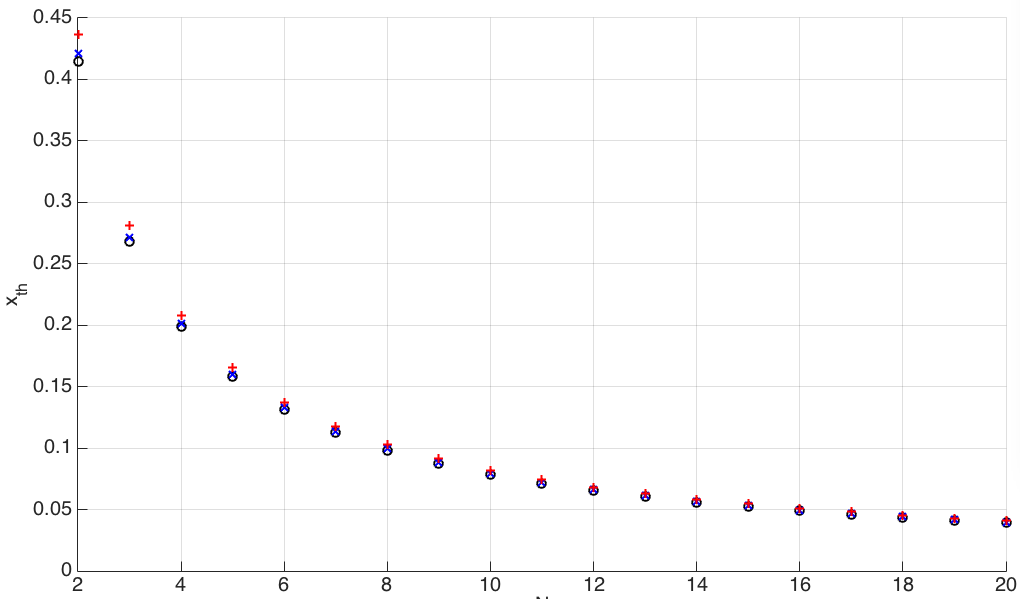}
\caption{Values of $x_{th}$ of $N$-NOPA systems ($2\leq N \leq 20$) in the absence of losses ($\alpha=1, \kappa=0$) (black circles), with transmission losses only  ($\alpha = 10^\frac{-0.01}{N-1}, \kappa=0$) (blue crosses) and with both transmission and amplification losses  ($\alpha = 10^\frac{-0.01}{N-1}, \kappa=\left(\frac{3 \times 10^6}{0.6 \times \sqrt{2}}\right) x_{th}$) (red plus signs), with $y=1$ and $d=1$.}\label{fig:stability}
\end{center}
\end{figure}

\begin{table}[htbp]
\centering
\caption{Values of stability threshold $x_{th}$ of the $N$-NOPA coherent feedback system ($2 \leq N \leq 6$) without losses, with transmission losses only and with both transmission and amplification losses, $y=1$ and $d=1$.}\label{tb:stability} 
\begin{tabular}{|c|c|c|c|}
\hline
N&$x_{th}$&$ x_{th}$&$ x_{th}$\\ 
&($\alpha=1, \kappa=0$)&($\alpha = 10^\frac{-0.01}{N-1}, \kappa=0$)&($\alpha = 10^\frac{-0.01}{N-1}, \kappa=\left(\frac{3 \times 10^6}{0.6 \times \sqrt{2}}\right) x_{th}$)\\
\hline
2&0.4142&0.4209&0.4363\\
\hline
3&0.2679&0.2715&0.2808\\
\hline
4&0.1989&0.2013&0.2080\\
\hline
5&0.1583&0.1602&0.1655\\
\hline
6&0.1316&0.1331&0.1375\\
\hline
\end{tabular}
\end{table}

Note that when values of all system parameters except for $x$ are given, we can also use the {\it mussv}  function in the Robust Control Toolbox of MATLAB to estimate the stability threshold. Details are given in Appendix~1. 

%%%%%%%%%%%%%%%%%%%%%%%%%%%%%%%%%%%%%%%%%%%%%%%%%%%%%%%%%%%%%%%%%%%%%%%%%%%%%%%%
%%%%%%%%%%%%%%%%%%%%%%%%%%%%%%%%%%%%%%%%%%%%%%%%%%%%%%%%%%%%%%%%%%%%%%%%%%%%%%%%
%%%%%%%%%%%%%%%%%%%%%%%%%%%%%%%%%%%%%%%%%%%%%%%%%%%%%%%%%%%%%%%%%%%%%%%%%%%%%%%%
\section{Entanglement} 
\label{sec:entanglement}
\noindent
In this section, entanglement performances are compared among the systems with number of NOPAs varying from 2 to 6, in the absence of time delays. First, we study the EPR entanglement, namely, the two-mode squeezing, between the two outgoing fields when systems are in an ideal case where no losses are present. After that, the effect of transmission and amplification losses on the two-mode squeezing is taken into account. In the end, we analyze the entanglement between pairs of optical cavity modes in the system using logarithmic negativity as an entanglement measure.

\subsection{End-to-end EPR entanglement}
\label{sec:entg_output}
\noindent
Strong correlation between quadrature-phase amplitudes of two fields is a manifestation of EPR entanglement \cite{Ou1992}. The EPR entanglement between two continuous-mode fields is quantified in frequency domain. To this end, we define the Fourier transforms of $\xi_{out,b}(t)$ and $\xi_{out,a}(t)$ in Eq.~(\ref{eq:N-NOPA-outputs}) as $\Xi_{out,b}(\imath\omega)=\frac{1}{\sqrt{2\pi}}\int_{-\infty}^{\infty} \xi_{out,b}(t)e^{-\imath\omega t} dt$ and $\Xi_{out,a}(\imath\omega)=\frac{1}{\sqrt{2\pi}}\int_{-\infty}^{\infty} \xi_{out,a}(t)e^{-\imath\omega t} dt$. The two-mode amplitude spectrum $V_{+}(\imath \omega)$ and the two-mode phase spectrum $V_{-}(\imath \omega)$ are defined as \cite{Ou1992}
\small
\begin{eqnarray*}
\langle(\tilde \Xi_{out,a}^q(\imath \omega)+\tilde \Xi_{out,b}^q(\imath \omega))^* (\tilde \Xi_{out,a}^q(\imath \omega')+\tilde \Xi_{out,b}^q(\imath \omega')) \rangle = V_+(\imath \omega)\delta(\omega-\omega'), \\
\langle (\tilde \Xi_{out,a}^p(\imath \omega)-\tilde \Xi_{out,b}^p(\imath \omega))^* (\tilde \Xi_{out,a}^p(\imath \omega')-\tilde \Xi_{out,b}^p(\imath \omega')) \rangle = V_-(\imath \omega) \delta(\omega-\omega').
\end{eqnarray*}
\normalsize
Furthermore, based on (\ref{eq:N-NOPA-dynamic}) and (\ref{eq:N-NOPA-outputs}), the two-mode squeezing spectra are obtained by \cite{NY2012, GJN2010}
\begin{eqnarray}
V_+(\imath\omega)=& {\rm Tr}\left[H_1(\imath\omega)^* H_1(\imath\omega)\right], \label{eq:V_+}\\
V_-(\imath\omega)=& {\rm Tr}\left[H_2(\imath\omega)^* H_2(\imath\omega)\right], \label{eq:V_-}
\end{eqnarray}
where $H_1(\imath\omega)=[1\ 0\ 1\ 0]H(\imath\omega)$, $H_2(\imath\omega)=[0\ 1\ 0\ {-}1]H(\imath\omega)$, and $H(\imath\omega)=C_N\left(\imath\omega I-A_N \right)^{-1}B_N+D_N$ is the transfer function of the $N$-NOPA system. Note that $V_{\pm}(\imath \omega) \geq 0 $ for all $\omega$.

Define the two-mode squeezing spectrum $V(\imath\omega)$ as
\begin{eqnarray}
V(\imath\omega)=V_+(\imath\omega)+V_-(\imath\omega). \label{eq:V}
\end{eqnarray}
The fields $\xi_{out,a}$ and $\xi_{out,b}$ in the system shown in Fig.~\ref{fig:multi-nopa-cfb} are EPR entangled at the frequency $\omega$ rad/s if $\exists \theta_a, \theta_b \in(-\pi, \pi]$ such that  $V(\imath \omega, \theta_a, \theta_b)$ satisfies the sum criterion \cite{Vitali2006},
\begin{eqnarray}
V(\imath \omega, \theta_a, \theta_b)=V_+(\imath\omega, \theta_a, \theta_b)+V_-(\imath\omega, \theta_a, \theta_b)<4. \label{eq:entanglement-criterion}
\end{eqnarray}

Perfect two-mode squeezing has the feature that $V(\imath\omega, \theta_a, \theta_b) = V_\pm(\imath\omega, \theta_a, \theta_b) = 0$ \cite{Ou1992}. Of course, perfect squeezing cannot be achieved in practice. Therefore, one aims instead to have a small value $V_{\pm}(\imath \omega, \theta_a, \theta_b)$ over a wide frequency range \cite{Vitali2006}. According to \cite{SNQIP, NY2012}, $V_{\pm}(i\omega, \theta_a, \theta_b)\approx V_{\pm}(0, \theta_a, \theta_b)$ holds at low frequencies. Thanks to this, we can simply focus on the two-mode spectra $V(0, \theta_a, \theta_b)$ and $V_{\pm}(0, \theta_a, \theta_b)$ at $\omega=0$ for the rest of the paper.

In this paper, plots of the two-mode spectra are presented in dB unit, that is, $V_\pm(\imath \omega)({\rm dB}) = 10 \log_{10}V_\pm(\imath \omega)$ and $V(\imath \omega)({\rm dB})= 10 \log_{10}V(\imath \omega)$. In this case, perfect EPR entanglement at frequency $\omega$ corresponds to $V_{\pm}(\imath \omega) = -\infty~({\rm dB})$. Better two-mode squeezing is indicated by a more negative value of $V(\imath \omega)({\rm dB})$.  

\subsubsection{An ideal case.}
\label{sec:entg_ideal}
\noindent
Now we examine the two-mode spectra when the $N$-NOPA system is lossless. First, we aim to find values of $\theta_a$ and $\theta_b$  at which the cost function $V(0,\theta_a, \theta_b)$ at $\omega=0$ is minimized.  	
Based on (\ref{eq:V_+}), (\ref{eq:V_-}) and via Mathematica, we obtain Table~\ref{tb:Vthetas} which presents the formulas of the two-mode squeezing spectra of $N$-NOPA systems in the ideal case.

\begin{table}[hbp]
\centering
\caption{Two mode squeezing spectra of the $N$-NOPA coherent feedback system ($2 \leq N \leq 6$) without losses, $y=1$.}\label{tb:thetas}
\begin{tabular}{|c|c|}
\hline
N&$V_{\pm}(0)$\\ 
\hline
$2$&$2\frac{(1 + x^2)^4+ 16 x^2 (-1 + x^2)^2+ 8 x  (1 + x^2)^2 (-1 + x^2) \cos(\theta_a + \theta_b)}{(1 - 6 x^2 +  x^4)^2}$\\
\hline
$3$& $2\frac{(1 + x^2)^6+ 4 x^2 (3 - 10 x^2 + 3 x^4)^2+ 4 x (1 + x^2)^3 (3 - 10 x^2 + 3 x^4)\cos(\theta_a + \theta_b)}{(-1 + 15 x^2 - 15 x^4 + x^6)^2}$\\
\hline
$4$&$2\frac{(1 + x^2)^8+ 64 x^2 (-1 + 7 x^2 - 7 x^4 + x^6)^2+ 16 x (1 + x^2)^4 (-1 + 7 x^2 - 7 x^4 + x^6)\cos(\theta_a + \theta_b)}{(1 - 28 x^2 + 70 x^4 - 28 x^6 + x^8)^2}$\\
\hline
$5$&$2\frac{(1 + x^2)^{10}+ 4 x^2 (5 - 60 x^2 + 126 x^4 - 60 x^6 + 5 x^8)^2+ 4 x (1 + x^2)^5 (5 - 60 x^2 + 126 x^4 - 60 x^6 + 5 x^8)\cos(\theta_a + \theta_b)}{(-1 + 45 x^2 - 210 x^4 + 210 x^6 - 45 x^8 + x^{10})^2}$\\
\hline
$6$&$2\frac{(1 + x^2)^{12}+ 16 x^2 (-3 + 55 x^2 - 198 x^4 + 198 x^6 - 55 x^8 + 3 x^{10})^2+ 8x (1 + x^2)^6 (-3 + 55 x^2 - 198 x^4 + 198 x^6 - 55 x^8 + 3 x^{10})\cos(\theta_a + \theta_b)}{(1 - 66 x^2 + 495 x^4 - 924 x^6 + 495 x^8 - 66 x^{10} + x^{12})^2}$\\
\hline
\end{tabular}
\end{table}

For any value of $x$ in the interval $(0,  x_{th})$, we have $ -1 + x^2<0$ for the $2$-NOPA system, $3 - 10 x^2 + 3 x^4>0$ for the $3$-NOPA system, $-1 + 7 x^2 - 7 x^4 + x^6<0$ for the $4$-NOPA system, $5 - 60 x^2 + 126 x^4 - 60 x^6 + 5 x^8>0$ for the $5$-NOPA system, and $-3 + 55 x^2 - 198 x^4 + 198 x^6 - 55 x^8 + 3 x^{10}<0$ for the $6$-NOPA system. Recall that the values of the stability threshold $x_{th}$ are listed in Table~\ref{tb:stability}.
Thus, for systems with an even number of NOPAs, the best two-mode squeezing is obtained if  $\theta_a+\theta_b=0$ or $\theta_a=\theta_b=\pi$; for systems with an odd number of NOPAs, the best two-mode squeezing is achieved when $|\theta_a+\theta_b|=\pi$. In this regard, we set 
\begin{eqnarray}
(\theta_a, \theta_b)= \left\{ \begin{array}{ll}
(0,0), &{\rm for} ~N~ {\rm is~ even},\\ 
(\pi,0), &{\rm for} ~ N~ {\rm is~ odd}. 
\end{array}\right. \label{eq:thetas}
\end{eqnarray}
In the rest of paper, $V(\imath \omega)$ and $V_\pm(\imath \omega)$ are defined  as the two-mode spectra at the fixed values of $\theta_a$ and $\theta_b$ as given in (\ref{eq:thetas}).

Now we check the two-mode spectra $V_{\pm}(0,k)$ as a function of $k$ at $\omega=0$, where $x=kx_{th}$ as the value of $k$ varies from $0.5$ to $1$.  As shown in Fig.~\ref{fig:x_varying}, the two-mode squeezing spectra decrease as the value of $x$ approaches the stability threshold. Moreover, the rates of the decreases are similar.
%Moerover, a sharp reduction occurs when $x$ is very close to the threshold, approximately over $0.98\hat{x}_{th}$. The rapid change is owing to that the threshold $x_{th}$ is a zero of $\det (A)(x)$ as a function of $x$ based on Lemma~\ref{le:stability}, thus it is an non-differentiable point of $V_{\pm}(0,x)$.
%%%%%%%%%%%%%%%%%%%%%%%%%%%%%%%%%%%%%%%%%%%%%%%%%%%%%%%%%%%%%%%%%%%%%%%%%%%%%%%%%%%%%%%%%%%%%%%%%%%%%%%%%%%%%%%%%%%%%%%%%%%%%%%%%%%%%%%%%%%%%%%%%%%%%%%%%%%
\begin{figure}[htbp]
\begin{center}
\includegraphics[scale=0.23]{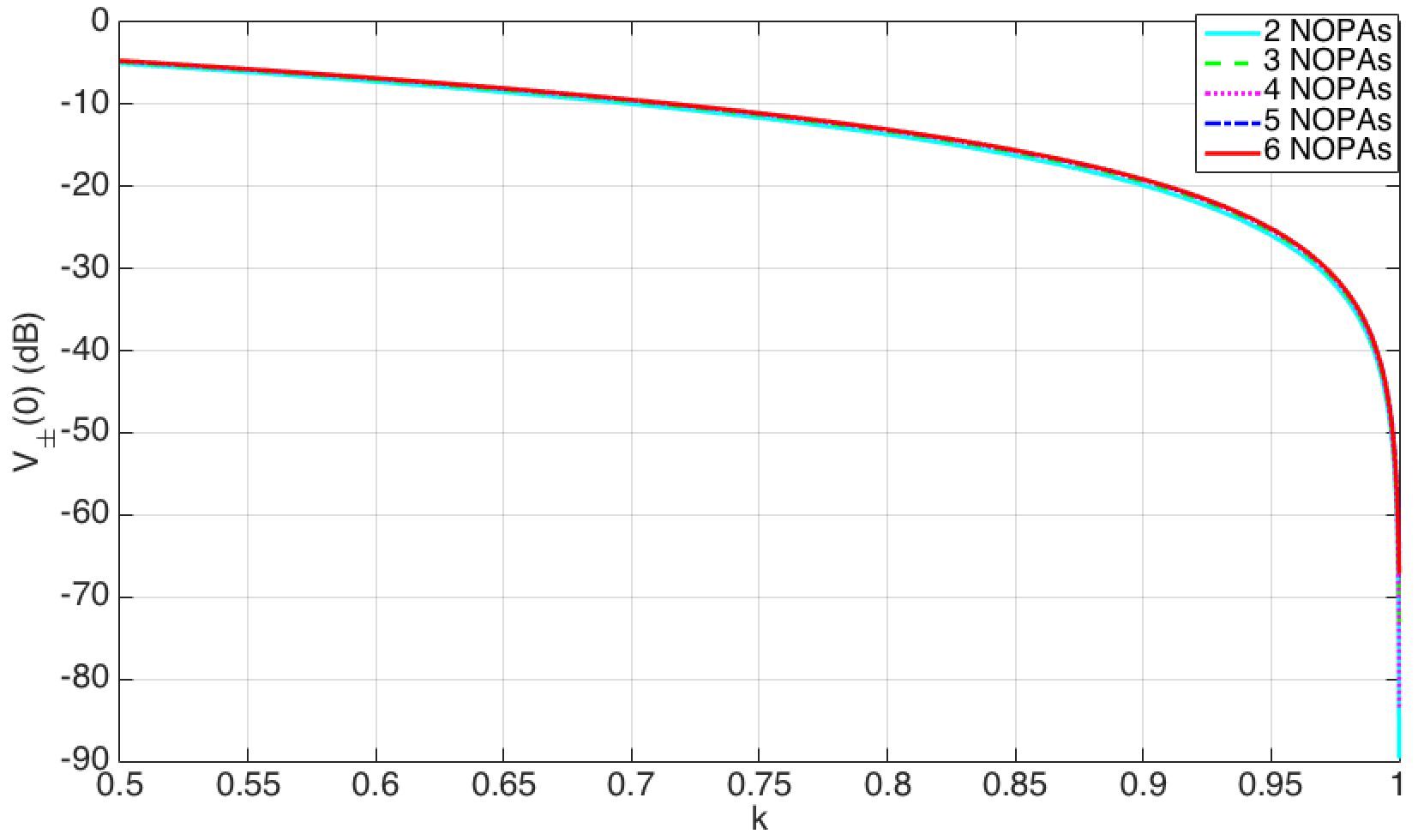}
\caption{Log-log plots of $V(0,k)~({\rm dB})$ with respect to $N$-NOPA systems ($2 \leq N \leq 6$) with $k$ varying from $0.5$ to $1$, $x=kx_{th}$, $y=1$, $\alpha=1$ and $\kappa=0$.}\label{fig:x_varying}
\end{center}
\end{figure}
%%%%%%%%%%%%%%%%%%%%%%%%%%%%%%%%%%%%%%%%%%%%%%%%%%%%%%%%%%%%%%%%%%%%%%%%%%%%%%%%%%%%%%%%%%%%%%%%%%%%%%%%%%%%%%%%%%%%%%%%%%%%%%%%%%%%%%%%%%%%%%%%%%%%%%%%%%%%

One of our interests is the power consumption of the systems to generate the same level of EPR entanglement, say, $V(0)=-25~{\rm dB}$. We denote the corresponding $x$ as $x_{-25~\rm dB}$. Here the power of pump beam employed by each NOPA is $x^2 \gamma_r^2$. Hence, the total pump power of an $N$-NOPA system is $Nx^2\gamma_r^2$. Since $\gamma_r$ is a fixed reference, to compare pump consumption between different values of $N$, it is enough to consider only the quantity $Nx^2$. As the third column of Table~\ref{tb:25dB_loss} indicates, a system with more NOPAs consumes less power to yield the same degree of two-mode squeezing.

\subsubsection{Effects of losses.}
\label{sec:entg_losses}
\noindent
The effect of losses on the two-mode squeezing of the systems is indicated in Table~\ref{tb:25dB_loss}. All the systems have the same value of $V(0)$ ($V(0)=-25 ~{\rm dB}$) when losses are neglected. EPR entanglement of each system is degraded by around $15~{\rm dB}$ under the effect of transmission losses, and the reduction is more than $20~{\rm dB}$ if both transmission and amplification losses are present. Generally, a system employing more NOPAs provides a slight improvement in EPR entanglement when transmission losses are present, in the absence of amplification losses. This merit disappears in the presence of both transmission and amplification losses. In this case, the system with more NOPAs yields less EPR entanglement.
%%%%%%%%%%%%%%%%%%%%%%%%%%%%%%%%%%%%%%%%%%%%%%%%%%%%%%%%%%%%%%%%%%%%%%%%%%%%%%%%%%%%%%%%%%%%%%%%%%%%%%%%%%%%%%%%%%%%%%%%%%%%%%%%%%%%%%%%%%%%%%%%%%%%%%%%%%%%%%%%%%%%%%%%%%%%%%%%%%%%%%%%%%%%%%%%%%%%%%%%%%%%%%%%%%%%%%%%%%%%%%%%
\begin{table}[htbp]
\small
\centering
\caption{Power consumptions ($Nx^2$), values of $V_\pm(0)$, and values $V(0)$ of $N$-NOPA systems ($2 \leq N \leq 6$) under effect of losses, with $x=x_{-25~\rm dB}$, $y=1$ and $d=1$.}\label{tb:25dB_loss}
\begin{tabular}{|c|c|c|c|c|c|c|}
\hline
N&$x_{-25~\rm dB}$&$Nx^2_{-25~\rm dB}$&$V_\pm(0)$ &$V(0)$&$V_\pm(0)$ &$V(0)$\\ 
&($\alpha=1$,&&($\alpha = 10^\frac{-0.01}{N-1}$,&($\alpha = 10^\frac{-0.01}{N-1}$,&($\alpha = 10^\frac{-0.01}{N-1}$,&($\alpha = 10^\frac{-0.01}{N-1},$\\ 
&$\kappa=0$)&&$\kappa=0$)&$\kappa=0$)&$\kappa=\left(\frac{3 \times 10^6}{0.6 \times \sqrt{2}}\right)x$)&$\kappa=\left(\frac{3 \times 10^6}{0.6 \times \sqrt{2}}\right)x$)\\ 
\hline
2&0.3978&0.3165&-13.3150&-10.3047&-7.5838&-4.5735\\
\hline
3&0.2579&0.1995&-13.3286&-10.3183&-7.4114&-4.4011\\
\hline
4&0.1916&0.1468&-13.3302&-10.3199&-7.3510&-4.3407\\
\hline
5&0.1526&0.1164&-13.3295&-10.3192&-7.3236&-4.3133\\
\hline
6&0.1269&0.0966&-13.3306&-10.3203&-7.3078&-4.2975\\
\hline
\end{tabular}
\end{table}
\normalsize

\begin{table}[htbp]
\centering
\caption{IApproximate optimal two-mode squeezing under the effect of transmission losses and the corresponding power consumption for the $N$-NOPA systems ($2 \leq N \leq 6$) with $y=1$, $d=1$, $\alpha = 10^\frac{-0.01}{N-1}$ and $\kappa=0$.}\label{tb:optimal-trloss}
\begin{tabular}{|c|c|c|c|c|}
\hline
N&$x_{\rm opt}$&$Nx^2_{\rm opt}$&$V_\pm(0)$ &$V(0)$\\ 
\hline
2&0.4074&0.3319&-13.3683&-10.3580\\
\hline
3&0.2644&0.2097&-13.3928&-10.3825\\
\hline
4&0.1965&0.1544&-13.3991&-10.3888\\
\hline
5&0.1565&0.1225&-13.4018&-10.3915\\
\hline
6&0.1302&0.1017&-13.4033&-10.3930\\
\hline
\end{tabular}
\end{table}

\begin{table}[htbp]
\centering
\caption{Approximate optimal two-mode squeezing under effects of both transmission and amplification losses and the corresponding power consumption for the $N$-NOPA systems ($2 \leq N \leq 6$) with $y=1$, $d=1$, $\alpha = 10^\frac{-0.01}{N-1}$ and $\kappa=\left(\frac{3 \times 10^6}{0.6 \times \sqrt{2}}\right)x$.}\label{tb:optimal-losses}
\begin{tabular}{|c|c|c|c|c|}
\hline
N&$x_{\rm opt}$&$Nx^2_{\rm opt}$&$V_\pm(0)$ &$V(0)$\\ 
\hline
2&0.3770&0.2843&-7.6545&-4.6442\\
\hline
3&0.2435&0.1779&-7.4982&-4.4879\\
\hline
4&0.1805&0.1303&-7.4435&-4.4332\\
\hline
5&0.1438&0.1034&-7.4182&-4.4079\\
\hline
6&0.1195&0.0857&-7.4044&-4.3941\\
\hline
\end{tabular}
\end{table}

Now we compare the two-mode squeezing levels when the systems are consuming the same total pump power. From now on, we use $x_N$ to denote $x$ of the system with $N$ NOPAs.  In this case, we set $x_6=0.13$, hence $x_i=(\sqrt{6/i})x_6$ ($i=2,3,4,5$). The two-mode squeezing spectra in the (ideal) lossless case, in the presence of transmission losses only, as well as under effect of both transmission and amplification losses are plotted in Fig.~\ref{fig:same_power}. With the same total power, a system consisting of more NOPAs yields stronger EPR entanglement, except that when both transmission and amplification losses are present, the 6-NOPA system has a slightly lower degree of EPR entanglement than the 5-NOPA one. Moreover, the EPR entanglement of the system with less NOPAs has smaller change under the influence of losses.
%%%%%%%%%%%%%%%%%%%%%%%%%%%%%%%%%%%%%%%%%%%%%%%%%%%%%%%%%%%%%%%%%%%%%%%%%%%%%%%%%%%%%%%%%%%%%%%%%%%%%%%%%%%%%%%%%%%%%%%%%%%%%%%%%%%%%%%%%%%%%%%%%%%%%%%%%%%%%%%%%%%%%%%%%%%%%%%%%%%%%%%%%%%%%%%%%%%%%%%%%%%%%%%%%%%%%%%%%%%%%%%%
\begin{figure}[htbp]
\begin{center}
\includegraphics[scale=0.16]{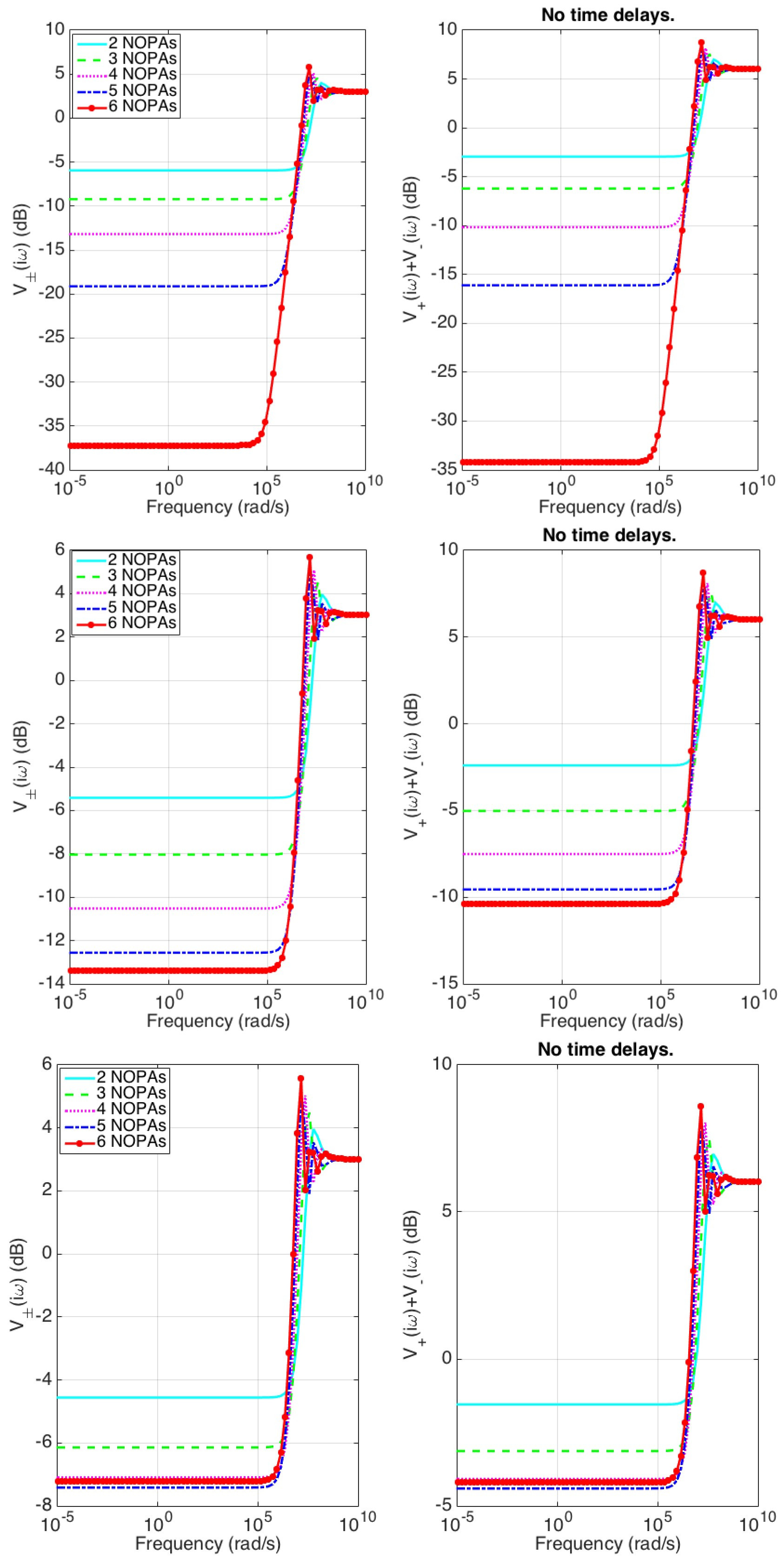}
\caption{Log-log plots of $V_{\pm}(\imath\omega)$ (left) and $V_+(\imath\omega)+V_-(\imath\omega)$ (right) of $N$-NOPA systems ($2 \leq N \leq 6$) without losses (top, $\alpha=1$, $\kappa=0$), with transmission losses only (middle, $\alpha = 10^\frac{-0.01}{N-1}$, $\kappa=0$) and with both transmission and amplification losses (bottom, $\alpha = 10^\frac{-0.01}{N-1}$, $\kappa=\left(\frac{3 \times 10^6}{0.6 \times \sqrt{2}}\right)x$), under the same total pump power, with $x_6=0.13$, $x_i=(\sqrt{6/i})x_6$ ($i=\lbrace2,3,4,5\rbrace$), $y=1$ and $d=1$.}\label{fig:same_power}
\end{center}
\end{figure}
%%%%%%%%%%%%%%%%%%%%%%%%%%%%%%%%%%%%%%%%%%%%%%%%%%%%%%%%%%%%%%%%%%%%%%%%%%%%%%%%%%%%%%%%%%%%%%%%%%%%%%%%%%%%%%%%%%%%%%%%%%%%%%%%%%%%%%%%%%%%%%%%%%%%%%%%%%%%%%%%%%%%%%%%%%%%%%%%%%%%%%%%%%%%%%%%%%%%%%%%%%%%%%%%%%%%%%%%%%%%%%%%

To find the approximate value of $x_{\rm opt}$ at which the system achieves the highest degree of two-mode squeezing at $\omega =0$ under the effect of losses, we pick the smallest one among the values of $V_{\pm}(0)$ corresponding to a thousand samples of $x$ evenly spread through the range $[0.001,1]x_{th}$. Tables~\ref{tb:optimal-trloss} and \ref{tb:optimal-losses} illustrate the values of $x_{\rm opt}$, the corresponding two-mode squeezing degrees, and the total power consumptions of the $N$-NOPA systems in the scenarios with only transmission losses as well as when both transmission and amplification losses are present, respectively. As the tables indicate, the best two-mode squeezing degrees of all the systems are similar, however, the system with more NOPAs consumes less total pump power. For instance, a 6-NOPA system needs less than a third of power used by the dual-NOPA system. Thus, the system should employ more NOPAs in the presence of losses for efficient use of pump power, while only losing a small amount of EPR entanglement.

\subsection{Entanglement of two-mode Gaussian states}
\label{sec:entg_mode}
\noindent
In this sub-section, we study the entanglement of two-mode Gaussian states with respect to the cavity mode operators when the $N$-NOPA system is lossless. To this end, we first calculate the covariance matrix $P_N(t)$ and steady-state covariance matrix $P_N$ of the $2N$-mode Gaussian state of the system. Based on Eq.~(\ref{eq:cvm}), (\ref{eq:LDE}) and (\ref{eq:LDE_steady}), we employ the Matlab functions $\it{ode45}$ with the sampling time $10^{-10} ~{\rm sec}$ and $lyap$ to compute $P_N(t)$ and $P_N$, respectively. Here, we take $P_{0}=I$ corresponding to that the system starts in a vacuum state.

The covariance matrix $\tilde{P}_{N,a_i,b_j}(t)$ of the modes $a_i$ and $b_j$ ($i,j=\lbrace1,2,\cdots,N\rbrace$) is a corresponding $4 \times 4$ sub-matrix of $P_N(t)$. For instance, the covariance matrix of $a_1$ and $b_6$ in a $6$-NOPA system is
\begin{eqnarray}
\tilde P_{6,a_1,b_6}=\left[\begin{array}{cccc} 
(P_6)_{1,1} & (P_6)_{1,2} & (P_6)_{1,23} & (P_6)_{1,24}\\ 
(P_6)_{2,1} & (P_6)_{2,2} & (P_6)_{2,23} & (P_6)_{2,24}\\ 
(P_6)_{23,1} & (P_6)_{23,2} & (P_6)_{23,23} & (P_6)_{23,24}\\ 
(P_6)_{24,1} & (P_6)_{24,2} & (P_6)_{24,23} & (P_6)_{24,24} 
\end{array}\right].  
\end{eqnarray}

Entanglement of two-mode Gaussian states with corresponding covariance matrix $\tilde{P}_{N,a_i,b_j}(t)$ is measured by the logarithmic negativity $E_{N,a_i,b_j}(t)$ \cite{Laurat2005, Nurdin2012}. Write $\tilde{P}_{N,a_i,b_j}(t)$ in a $2 \times 2$ block matrix form given by
\begin{eqnarray}
\tilde{P}_{N,a_i,b_j}(t)=\left[\begin{array}{cc} \tilde{P}_{N,a_i,b_j,1}(t) & \tilde{P}_{N,a_i,b_j,2}(t)\\ 
                               \tilde{P}_{N,a_i,b_j,2}(t)^T & \tilde{P}_{N,a_i,b_j,3}(t)
\end{array}\right], \label{eq:4tildeP}
\end{eqnarray}
where $\tilde{P}_{N,a_i,b_j,k}(t) ~ (k=\lbrace1,2,3\rbrace)$ is a $2 \times 2$ matrix.
Define 
\begin{eqnarray}
\tilde{\Delta}_{N,a_i,b_j}(t)=\det(\tilde{P}_{N,a_i,b_j,1}(t))+\det(\tilde{P}_{N,a_i,b_j,3}(t))-2\det(\tilde{P}_{N,a_i,b_j,2}(t)) \label{eq:4tildeDelta}, \\
\nu_{N,a_i,b_j}(t)=\sqrt{\frac{\tilde{\Delta}_{N,a_i,b_j}(t)-\sqrt{\tilde{\Delta}_{N,a_i,b_j}(t)^2-4\det(\tilde{P}_{N,a_i,b_j}(t))}}{2}}. \label{eq:4nu}
\end{eqnarray}
Then $E_{N,a_i,b_j}(t)$ is a nonnegative real number given by 
\begin{eqnarray}
E_{N,a_i,b_j}(t)=\max [0, -\log_2 \nu_{N,a_i,b_j}(t)]. \label{eq:logarithmic_negativity}
\end{eqnarray}
$E_{N,a_i,b_j}(t)=0$ represents that the modes $a_i$ and $b_j$ are separable at time $t$, that is, there is no entanglement between the modes. Strong entanglement between modes $a_i$ and $b_j$ is represented by a high value of  $E_{N,a_i,b_j}(t)$. 

What is of interest to us are the time evolution and steady-state values of logarithmic negativities $E_{N,a_i,b_i}(t)$, $E_{N,a_i,b_{i+1}}(t)$, $E_{N,a_{i+1},b_i}(t)$, $E_{N,a_1,b_N}(t)$ as well as $E_{N,a_N,b_1}(t)$ of an $N$-NOPA ($2 \leq N \leq 6$) coherent feedback network. Besides, the logarithmic negativity of $a_c=\frac{1}{\sqrt{N}}\sum\limits_{i=1}^N a_i$ and $b_c=\frac{1}{\sqrt{N}}\sum\limits_{i=1}^N b_i$ is also looked into. The reason is that, as (\ref{eq:NOPA-output}) indicates, when losses and delays are neglected, the outputs $\xi_{out,a}$ and $\xi_{out,b}$ contain $a_c$ and $b_c$, respectively. Notice that $a_c$ and $b_c$ can be viewed as collective single mode annihilation operators as they satisfy the commutation relations $[a_c,a_c^*]=1$, $[b_c,b_c^*]=1$, $[a_c,b_c]=0$ and $[a_c,b_c^*]=0$. The covariance matrix of $a_c$ and $b_c$ is $P_{N,a_c,b_c}(t)=M_NP_N(t)M_N^T$, with $M_N=\frac{1}{\sqrt{N}}[I~\cdots~I]$.

\begin{table}[htbp]
\centering
\caption{The steady-state logarithmic negativities of $N$-NOPA systems ($2 \leq N \leq 6$) in the absence of losses and delays, under the same total pump power, with $x_6=0.13$, $x_i=(\sqrt{6/i})x_6$ ($i=\lbrace2,3,4,5\rbrace$), $y=1$， $\alpha=1$, and $\kappa=0$.}\label{tb:logarithmic_negativity_steady}
\begin{tabular}{|c|c|c|c|c|c|c|c|}
\hline
N&$E_{2, a_c, b_c }$&$E_{2, a_1, b_2 }$&$E_{2, a_2, b_1 }$&$E_{2, a_1, b_1 }$ &$E_{2, a_2, b_2 }$& & \\ 
\hline
2&0&0&0.4850&0.1921&0.1921&&\\
\hline
N&$E_{3, a_c, b_c}$&$E_{3, a_1, b_2 }$&$E_{3, a_2, b_3 }$&$E_{3, a_3, b_1 }$&&&\\
\hline
3&0.0561&0&0&0.2865&&&\\
\hline
&$E_{3, a_2, b_1 }$&$E_{3, a_3, b_2 }$&$E_{3, a_1, b_3 }$&&&&\\
\hline
&0.3645&0.3645&0&&&&\\
\hline
&$E_{3, a_1, b_1 }$ &$E_{3, a_2, b_2 }$&$E_{3, a_3, b_3 }$&&&&\\ 
\hline
&0.1144&0.1144&0.1144&&&&\\
\hline
N&$E_{4, a_c, b_c}$&$E_{4, a_1, b_2 }$&$E_{4, a_2, b_3 }$&$E_{4, a_3, b_4 }$&$E_{4, a_4, b_1 }$&&\\
\hline
4&0&0&0&0&0.1843&&\\
\hline
&$E_{4, a_2, b_1 }$&$E_{4, a_3, b_2 }$&$E_{4, a_4, b_3 }$&$E_{4, a_1, b_4 }$&&&\\
\hline
&0.2803&0.3021&0.2803&0&&&\\
\hline
&$E_{4, a_1, b_1 }$&$E_{4, a_2, b_2 }$&$E_{4, a_3, b_3 }$&$E_{4, a_4, b_4 }$&&&\\ 
\hline
&0.0722&0.0722&0.0722&0.0722&&&\\
\hline
N&$E_{5, a_c, b_c}$&$E_{5, a_1, b_2 }$&$E_{5, a_2, b_3 }$&$E_{5, a_3, b_4 }$&$E_{5, a_4, b_5 }$&$E_{5, a_5, b_1 }$&\\
\hline
5&0.0223&0&0&0&0&0.1207&\\
\hline
&$E_{5, a_2, b_1 }$&$E_{5, a_3, b_2 }$&$E_{5, a_4, b_3 }$&$E_{5, a_5, b_4 }$&$E_{5, a_1, b_5 }$&&\\
\hline
&0.2134&0.2500&0.2500&0.2134&0&&\\
\hline
&$E_{5, a_1, b_1 }$ &$E_{5, a_2, b_2 }$&$E_{5, a_3, b_3 }$&$E_{5, a_4, b_4 }$&$E_{5, a_5, b_5 }$&&\\ 
\hline
&0.0451&0.0451&0.0451&0.0451&0.0451&&\\
\hline
N&$E_{6, a_c, b_c}$&$E_{6, a_1, b_2 }$&$E_{6, a_2, b_3 }$&$E_{6, a_3, b_4 }$&$E_{6, a_4, b_5 }$&$E_{6, a_5, b_6 }$&$E_{6, a_6, b_1 }$\\
\hline
6&0&0&0&0&0&0&0.0767\\
\hline
&$E_{6, a_2, b_1 }$&$E_{6, a_3, b_2 }$&$E_{6, a_4, b_3 }$&$E_{6, a_5, b_4 }$&$E_{6, a_6, b_5 }$&$E_{6, a_1, b_6 }$&\\
\hline
&0.1552&0.2033&0.2195&0.2033&0.1552&0&\\
\hline
&$E_{6, a_1, b_1 }$ &$E_{6, a_2, b_2 }$&$E_{6, a_3, b_3 }$&$E_{6, a_4, b_4 }$&$E_{6, a_5, b_5 }$&$E_{6, a_6, b_6 }$&\\ 
\hline
&0.0260&0.0260&0.0260&0.0260&0.0260&0.0260&\\
\hline
\end{tabular}
\end{table}

%%%%%%%%%%%%%%%%%%%%%%%%%%%%%%%%%%%%%%%%%%%%%%%%%%%%%%%%%%%%%%%%%%%%%%%%%%%%%%%%%%%%%%%%%%%%%%%%%%%%%%%%%%%%%%%%%%%%%%%%%%%%%%%%%%%%%%%%%%%%%%%%%%%%%%%%%%%%%%%%%%%%%%%%%%%%%%%%%%%%%%%%%%%%%%%%%%%%%%%%%%%%%%%%%%%%%%%%%%%%%%%%
\begin{figure}[htbp]
\begin{center}
\includegraphics[scale=0.23]{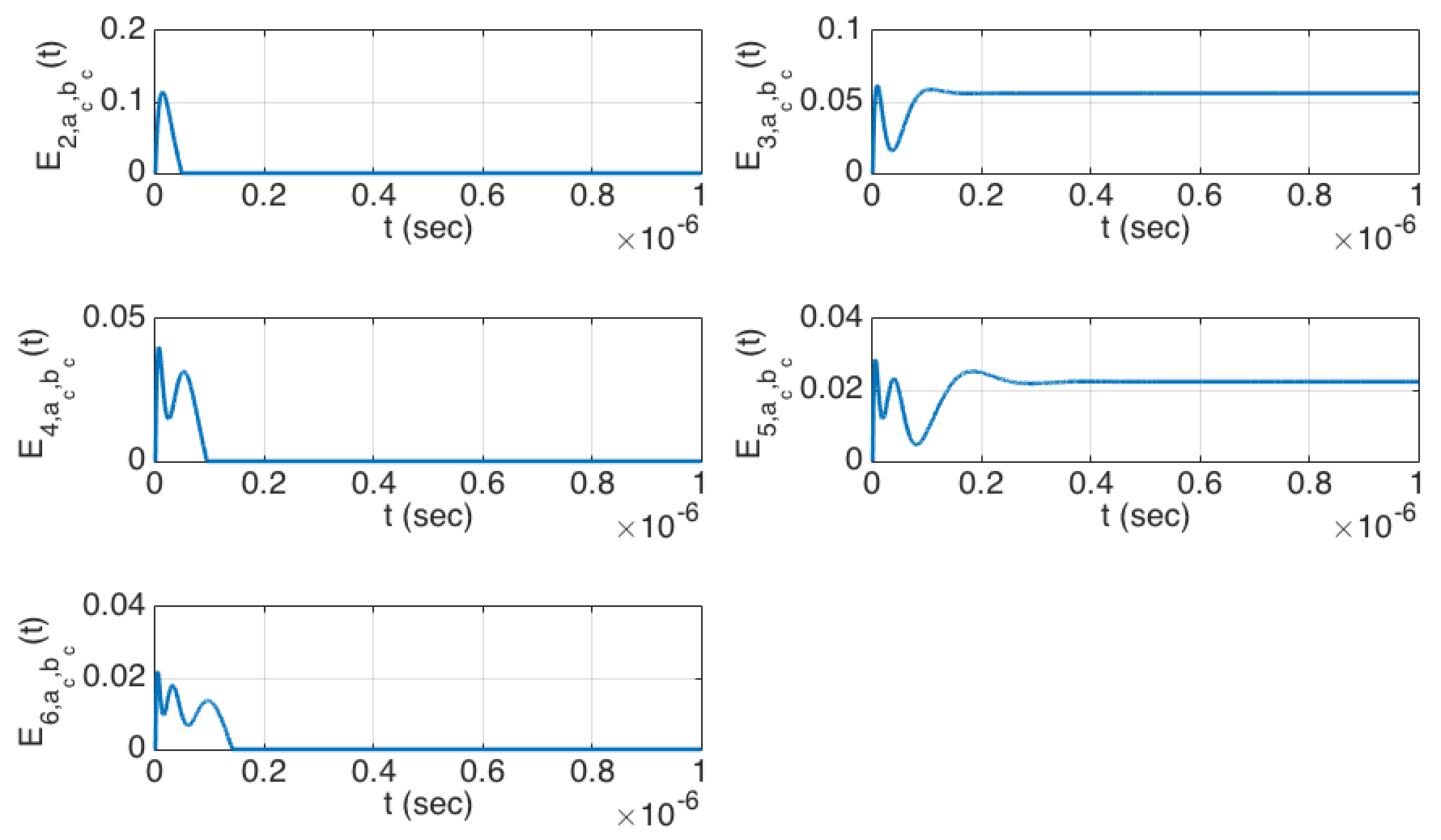}
\caption{Time evolution of $E_{N,a_c,b_c}$ ($2 \leq N \leq 6$), in the absence of losses and delays, under the same total pump power, with $x_6=0.13$, $x_i=(\sqrt{6/i})x_6$ ($i=\lbrace2,3,4,5\rbrace$), $y=1$, $\alpha=1$, and $\kappa=0$.}\label{fig:logarithmic_negativities_ab}
\end{center}
\end{figure}
%%%%%%%%%%%%%%%%%%%%%%%%%%%%%%%%%%%%%%%%%%%%%%%%%%%%%%%%%%%%%%%%%%%%%%%%%%%%%%%%%%%%%%%%%%%%%%%%%%%%%%%%%%%%%%%%%%%%%%%%%%%%%%%%%%%%%%%%%%%%%%%%%%%%%%%%%%%%%%%%%%%%%%%%%%%%%%%%%%%%%%%%%%%%%%%%%%%%%%%%%%%%%%%%%%%%%%%%%%%%%%%%
%%%%%%%%%%%%%%%%%%%%%%%%%%%%%%%%%%%%%%%%%%%%%%%%%%%%%%%%%%%%%%%%%%%%%%%%%%%%%%%%%%%%%%%%%%%%%%%%%%%%%%%%%%%%%%%%%%%%%%%%%%%%%%%%%%%%%%%%%%%%%%%%%%%%%%%%%%%%%%%%%%%%%%%%%%%%%%%%%%%%%%%%%%%%%%%%%%%%%%%%%%%%%%%%%%%%%%%%%%%%%%%%
\begin{figure}[htbp]
\begin{center}
\includegraphics[scale=0.18]{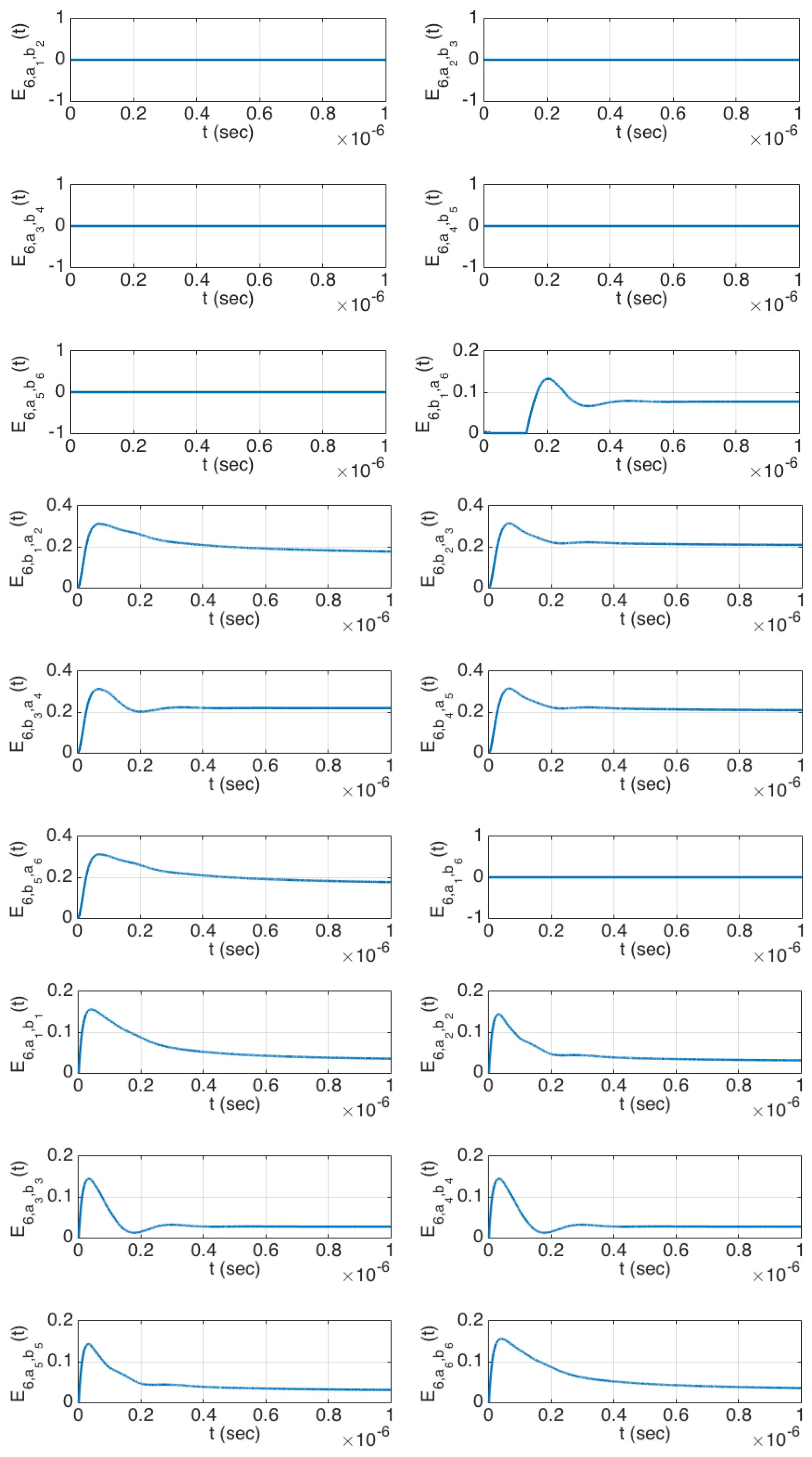}
\caption{Time evolution of logarithmic negativities of the 6-NOPA system in the absence of losses and delays, under the same total pump power, with $x_6=0.13$, $y=1$, $\alpha=1$, and $\kappa=0$.}\label{fig:logarithmic_negativities_6}
\end{center}
\end{figure}
%%%%%%%%%%%%%%%%%%%%%%%%%%%%%%%%%%%%%%%%%%%%%%%%%%%%%%%%%%%%%%%%%%%%%%%%%%%%%%%%%%%%%%%%%%%%%%%%%%%%%%%%%%%%%%%%%%%%%%%%%%%%%%%%%%%%%%%%%%%%%%%%%%%%%%%%%%%%%%%%%%%%%%%%%%%%%%%%%%%%%%%%%%%%%%%%%%%%%%%%%%%%%%%%%%%%%%%%%%%%%%%%

In the absence of losses and delays, Table~\ref{tb:logarithmic_negativity_steady} shows the values of steady-state logarithmic negativities of the $N$-NOPA systems, Fig.~\ref{fig:logarithmic_negativities_ab} indicates the time evolution of $E_{N,a_c,b_c}$, and Fig.~\ref{fig:logarithmic_negativities_6} plots the evolution of logarithmic negativities of a $6$-NOPA coherent feedback network. As indicated, $a_i$ and $b_{i+1}$ remain separable for all time, and the same happens to $a_1$ and $b_N$. At steady state, entanglement exists between modes $b_i$ and $a_{i+1}$, $a_i$ and $b_i$, as well as $a_N$ and $b_1$. In particular, it can be observed that internal entanglement synchronization occurs at steady state, that is, the degree of entanglement between the oscillator modes $a_i$ and $b_i$ in the cavity of each NOPA in the $N$-NOPA coherent feedback network is the same.
For systems with an odd number of NOPAs, there is slight entanglement between the collective modes $a_c$ and $b_c$, while in systems containing an even number of NOPAs, $a_c$ and $b_c$ are entangled at the beginning for a very short time. After that, the entanglement rapidly vanishes. Moreover, with the same total pump power, entanglement of two-mode Gaussian states in the system with more NOPAs is weaker. It is an interesting result that even though the two-mode entanglement of the internal cavity modes does not improve for  systems carrying more NOPAs, its EPR entanglement between the two outgoing fields does improve.

\section{Effect of Time Delays}
\label{sec:delays}
\noindent
%%%%%%%%%%%%%%%%%%%%%%%%%%%%%%%%%%%%%%%%%%%%%%%%%%%%%%%%%%%%%%%%%%%%%%%%%%%%%%%%%%%%%%%%%%%%%%%%%%%%%%%%%%%%%%%%%%%%%%%%%%%%%%%%%%%%%%%%%%%%%%%%%%%%%%%%%%%%%%%%%%%%%%%%%%%%%%%%%%%%%%%%%%%%%%%%%%%%%%%%%%%%%%%%%%%%%%%%%%%%%%%%
\begin{figure}[htbp]
\begin{center}
\includegraphics[scale=0.17]{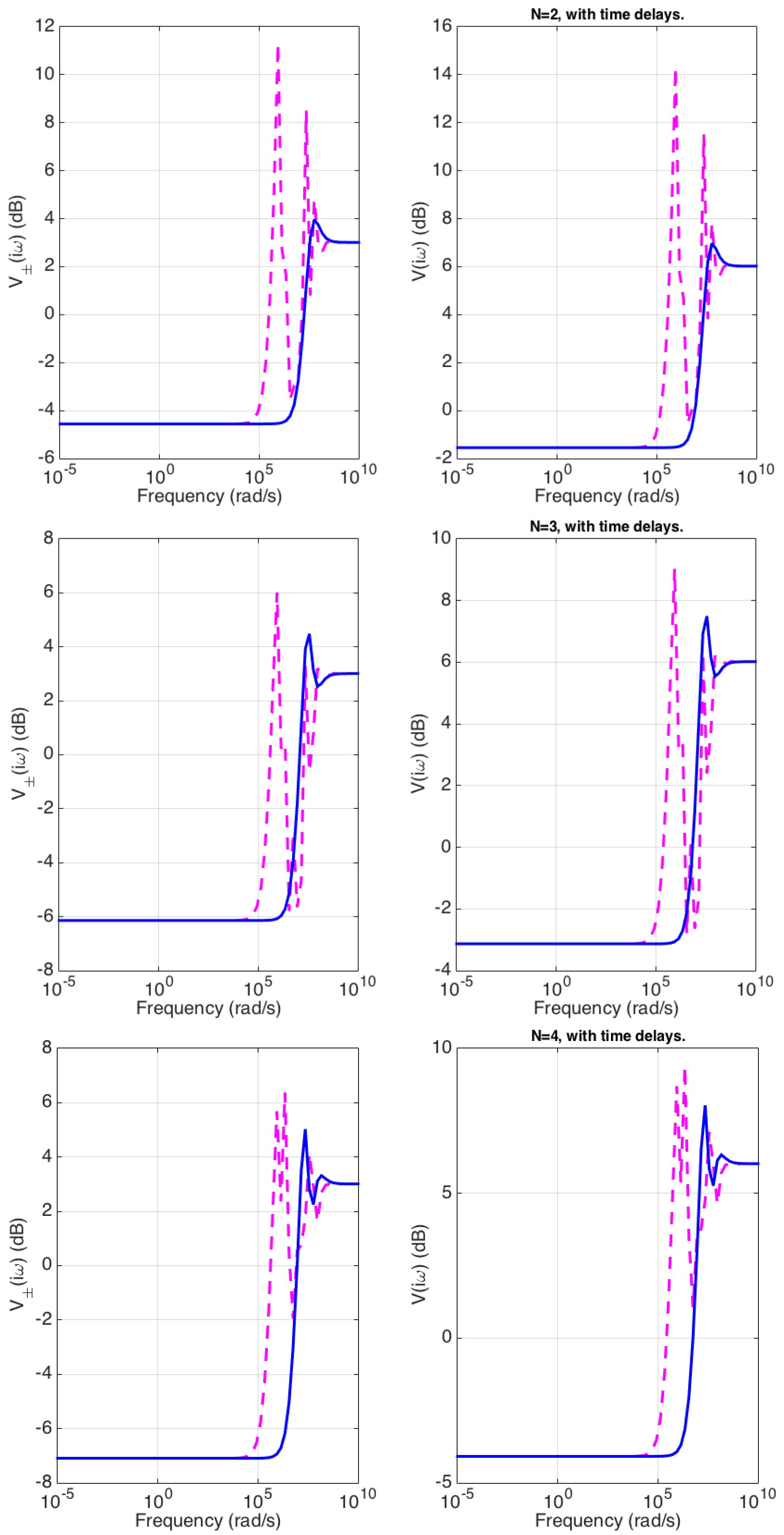}
\caption{Log-log plots of $V_{\pm}(\imath\omega)$ (left) and $V(\imath\omega)$ (right) with respect to a 2-NOPA system (top), a 3-NOPA system (middle) and a 4-NOPA system (bottom), without time delays (blue solid line) and with time delays (magenta dashed line), under the same total pump power, in the presence of losses, with $x_6=0.13$, $x_i=(\sqrt{6/i})x_6$ ($i=\lbrace2,3,4\rbrace$), $\alpha = 10^\frac{-0.01}{N-1}$, $\kappa=\left(\frac{3 \times 10^6}{0.6 \times \sqrt{2}}\right)x$, $\tau=\frac{1}{3\times 10^{-5} (N-1)}$, $y=1$ and $d=1$.}\label{fig:time_delay_squeezing1}
\end{center}
\end{figure}
%%%%%%%%%%%%%%%%%%%%%%%%%%%%%%%%%%%%%%%%%%%%%%%%%%%%%%%%%%%%%%%%%%%%%%%%%%%%%%%%%%%%%%%%%%%%%%%%%%%%%%%%%%%%%%%%%%%%%%%%%%%%%%%%%%%%%%%%%%%%%%%%%%%%%%%%%%%%%%%%%%%%%%%%%%%%%%%%%%%%%%%%%%%%%%%%%%%%%%%%%%%%%%%%%%%%%%%%%%%%%%%%
In this section, we investigate stability and entanglement of the $N$-NOPA systems in the presence of losses and time delays. For a $d$ km transmission distance, the time delay $\tau$ of each path between two neighbouring NOPAs is $\tau=\frac{d}{3\times 10^{-5} (N-1)}$.  To check stability of our time-delayed $N$-NOPA systems, we employ the DDE-BIFTOOL toolbox \cite{Engelborghs2001, Engelborghs2002}, which is a Matlab package used to plot the eigenvalues of a linear delay differential system. A system is stable if all real parts of the eigenvalues are negative. Based on the above fact, we find that stability of the $N$-NOPA systems for $N$ up to six is guaranteed in the case where the systems are given the same total pump power, $x_6=0.13$ and both losses and delays are present. 

%%%%%%%%%%%%%%%%%%%%%%%%%%%%%%%%%%%%%%%%%%%%%%%%%%%%%%%%%%%%%%%%%%%%%%%%%%%%%%%%%%%%%%%%%%%%%%%%%%%%%%%%%%%%%%%%%%%%%%%%%%%%%%%%%%%%%%%%%%%%%%%%%%%%%%%%%%%%%%%%%%%%%%%%%%%%%%%%%%%%%
\begin{figure}[htbp]
\begin{center}
\includegraphics[scale=0.2]{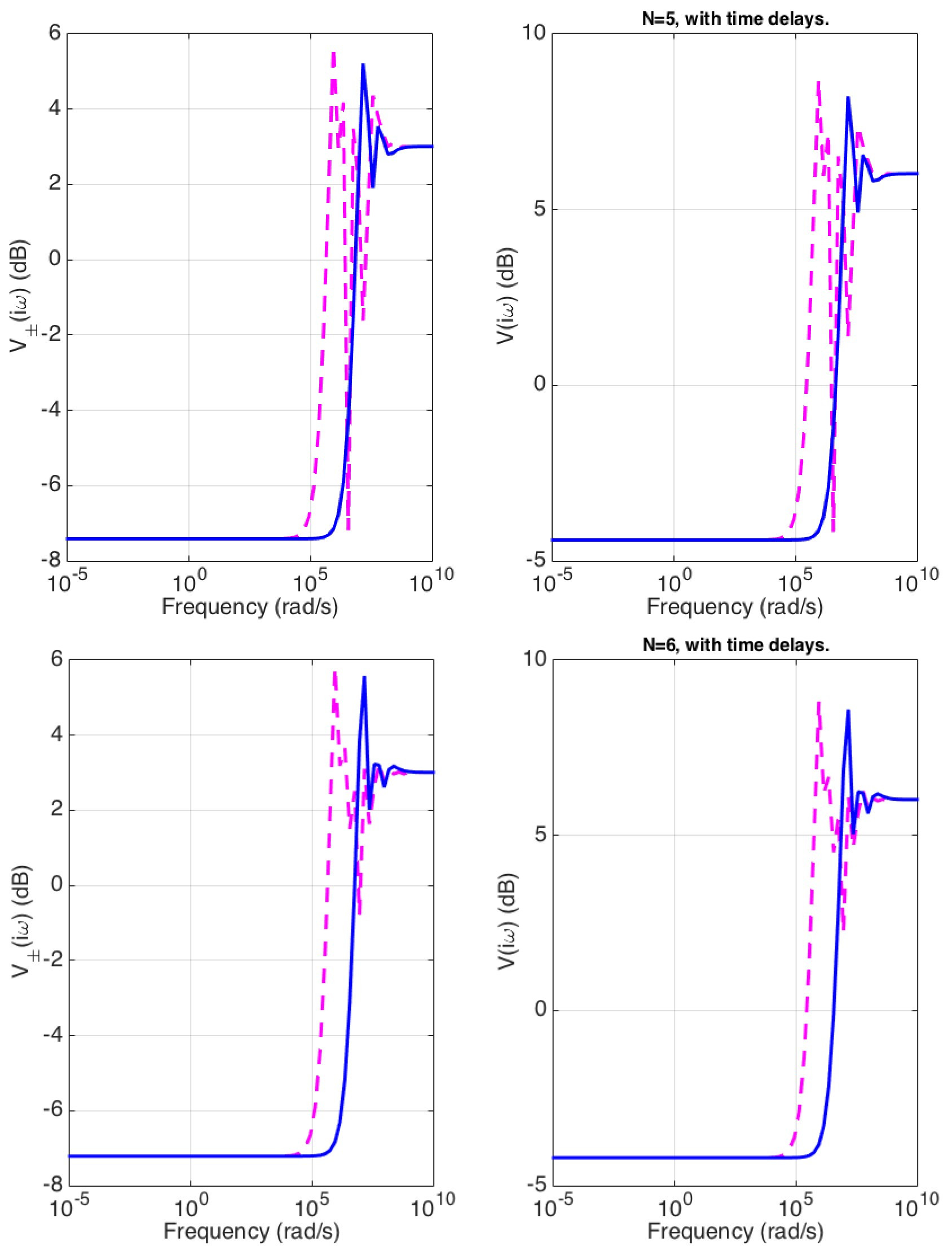}
\caption{Log-log plots of $V_{\pm}(\imath\omega)$ (left) and $V(\imath\omega)$ (right) with respect to a  5-NOPA system (top) and a 6-NOPA system (bottom), without time delays (blue solid line) and with time delays (magenta dashed line), under the same total pump power, in the presence of losses, with $x_6=0.13$, $x_5=(\sqrt{6/5})x_6$, $\alpha = 10^\frac{-0.01}{N-1}$, $\kappa=\left(\frac{3 \times 10^6}{0.6 \times \sqrt{2}}\right)x$, $\tau=\frac{1}{3\times 10^{-5} (N-1)}$, $y=1$ and $d=1$.}\label{fig:time_delay_squeezing2}
\end{center}
\end{figure}
%%%%%%%%%%%%%%%%%%%%%%%%%%%%%%%%%%%%%%%%%%%%%%%%%%%%%%%%%%%%%%%%%%%%%%%%%%%%%%%%%%%%%%%%%%%%%%%%%%%%%%%%%%%%%%%%%%%%%%%%%%%%%%%%%%%%%%%%%%%%%%%%%%%%%%%%%%%%%%%%%%%%%%%%%%%%%%%%%%%%%

As a linear quantum system, the $N$-NOPA network with time delays can be built in Matlab via commands {\it connect} and {\it delayss} in the Matlab Control System Toolbox. The non-rational transfer functions (due to the time delays) $H_1(s)$ and $H_2(s)$ in (\ref{eq:V_+}) and (\ref{eq:V_-}) can be numerically computed with the built-in Matlab frequency response command {\it freqresp}. Therefore, the two mode squeezing spectra $V_{\pm}(i\omega)$ are obtained via (\ref{eq:V_+}) and (\ref{eq:V_-}). 
The effect of time delays on EPR entanglement between the outgoing fields of our $N$-NOPA system is indicated in Fig.~\ref{fig:time_delay_squeezing1} and Fig.~\ref{fig:time_delay_squeezing2}, where all the systems are given the same total pump power and undergoing both transmission and amplification losses. Compared with the two-mode squeezing of the systems in the absence of delays, the presence of time delays reduces the bandwidth over which the EPR entanglement exists, but does not impact the EPR entanglement degrees at low frequencies. The phenomenon of the sharp peaks and dips at high frequencies is a common feature of the frequency response of systems under the effect of internal time delays, see, e.g, \cite[p. 182]{DiLoreto2007}. In our case, the bandwidth of EPR entanglement under influence of time delays is similar for all the systems with a different $N$.

\section{Conclusion}
\label{sec:conclusion}
\noindent
This paper has studied the stability condition and entanglement performance of an $N$-NOPA coherent feedback network with $N$ up to six, where the NOPAs are evenly distributed in a line between two distant parties, Alice and Bob. The system undergoes transmission losses, amplification losses and time delays. Moreover, two adjustable phase shifts $\theta_a$ and $\theta_b$ are placed at Alice and Bob for achieving the best two-mode squeezing between the two outgoing fields by selecting appropriate quadratures of the output fields.

In the absence of time delays, we have derived a necessary and sufficient stability condition with the aid of $\mu$-analysis method from $H^{\infty}$ control theory \cite{Zhou1996} by regarding $x$, the parameter related to the amplitude of pump beam, as an uncertainty. We have shown that, the value of stability threshold $x_{th}$ is the smallest positive root of the polynomial $\det \left(A_N(x)\right)$. It is observed that the existence of losses broadens the range of $x$ over which stability is guaranteed, and the value of $x_{th}$ decreases as more NOPAs are added to the system.

Strong EPR entanglement is represented by strong attenuation of two-mode output squeezing spectra below the sum criterion. In the ideal case, we have found the values of $\theta_a$ and $\theta_b$ at which the system achieves the best two-mode squeezing. Moreover, the two-mode squeezing increases rapidly as the value of $x$ approaches the stability threshold $x_{th}$. 

We have compared the two-mode squeezing generated by systems with different numbers of NOPAs. It is shown that, to achieve the same squeezing level in the ideal case, the system employing more NOPAs requires less total pump power. When losses are present, all the systems have a large and similar decrease in EPR entanglement. Given the same total pump power, the system carrying more NOPAs has improvement in the two-mode squeezing in the ideal case and when only transmission losses are present. However, this is no longer assured when amplification losses are also taken into account. Furthermore, the best two-mode squeezing degrees of the systems with losses are similar. However, the system with more NOPAs requires less pump power to achieve the best two-mode squeezing.

We have also investigated the entanglement of two-mode Gaussian states of the internal cavity modes. Steady-state values and time evolution of logarithmic negativities have been studied. It is shown that entanglement exists between the modes $a_i$ and $b_i$, $b_i$ and $a_{i+1}$ as well as $b_1$ and $a_N$. Moreover, we have observed an internal entanglement synchronization that occurs between the modes $a_i$ and $b_i$ for $i={2,3,4,5,6}$ at steady state. In the ideal case, given the same pump power, though the system with more NOPAs has improved two-mode squeezing in the output fields, it does not have better internal entanglement between cavity modes as measured by logarithmic negativity.

Stability and entanglement under the effect of time delays has been studied as well. With time delays, stability is checked with the DDE-BIFTOOL toolbox \cite{Engelborghs2001, Engelborghs2002}. It is shown that, with transmission and amplification losses, time delays narrow the bandwidth over which the EPR entanglement exists.

This work gives several qualitative findings on entanglement in the $N$-NOPA network when $N>2$, which have not been quantitatively analyzed and are topics suitable for future investigations. It is observed that the two-mode squeezing spectra decreases as the value of $x$ approaches $x_{th}$ when the system is lossless, and there exists an optimal value of $x$ for the two-mode squeezing when the system is under losses. These phenomena are not surprising and to some extent predictable as they were proved for the $2$-NOPA system in our previous work, see Theorem~2 and Theorem~3 in \cite{SNQIP}. In addition,  future work is still required to quantitatively analyze the observation that entanglement  between collective modes behaves differently between systems with an even and the odd number of NOPAs; also, adding more NOPAs into the system improves the end-to-end entanglement between continuous-mode output fields but not the entanglement of internal two-mode Gaussian states when the system is lossless and consumes the same pump power. Moreover, it would be of interest to quantitatively analyze the behaviour of the linear coherent feedback chain in the asymptotic limit of $N \rightarrow \infty$. 

\newpage
\section*{Appendix 1}
When values of all parameters except for $x$ of an $N$-NOPA coherent feedback system are given, we can employ the {\it mussv}  function in the Robust Control Toolbox of MATLAB to approximate the stability threshold. The approximated stability threshold, say $\hat x_{th}$, approaches the stability threshold $x_{th}$ from the left. That is, the system is robustly stable when the value of $x$ belongs to the range $(0, \hat x_{th}]$ and $\hat x_{th}$ approximates the threshold value $x_{th}$. The value of $\hat x_{th}$ is found by the following bisection algorithm. 

{\it Step 1}. Start from $\hat x_{th}=1$. If the system is not stable over the range $x\in (0,\hat x_{th}]$, set $x_h=\hat x_{th}$ and $x_l=0$; otherwise stop.

{\it Step 2}. Set $\hat x_{th}= \frac{x_h+x_l}{2}$. If the system is not stable over the range $x\in (0,\hat x_{th}]$, set $x_h=\hat x_{th}$; otherwise set $x_l=\hat x_{th}$. 

{\it Step 3}. If the value of $x_h-x_l>\varepsilon$ for a prespecified error tolerance $\varepsilon>0$ (here, we take $\varepsilon=10^{-10}$), go back to Step 2; otherwise set $\hat x_{th}= x_l$, stop.

\end{document}